\newcommand{\be}{\begin{eqnarray} \begin{aligned}}
\newcommand{\ee}{\end{aligned} \end{eqnarray} }
\newcommand{\benn}{\begin{eqnarray*} \begin{aligned}}
\newcommand{\eenn}{\end{aligned} \end{eqnarray*} }
\newcommand{\bc}{\begin{center}}
\newcommand{\ec}{\end{center}}
\newcommand{\id}{\mathbb{I}}
\newcommand{\tr}{\mathop{\mathrm{tr}}\nolimits}
\newtheorem{theorem}{Theorem}[section]
\newtheorem{lemma}[theorem]{Lemma}
\newcommand{\finproof}{$\Box$}
\def\Real{\mathbb{R}}
\def\Complex{\mathbb{C}}
\def\id{\mathbb{I}}
\def\01{\{0,1\}}
\newcommand{\ket}[1]{|#1\rangle}
\newcommand{\bra}[1]{\langle#1|}
\newcommand{\proj}[1]{|#1\rangle\langle#1|}
\newcommand{\inp}[2]{\langle{#1}|{#2}\rangle} 
\newcommand{\expect}[3]{\left\langle{#1}\right|{#2}\left|{#3}\right\rangle}
\newcommand{\mB}{\mathcal{B}}
\newcommand{\mC}{\mathcal{C}}
\newcommand{\mG}{\mathcal{G}}
\newcommand{\mL}{\mathcal{L}}
\newcommand{\mO}{\mathcal{O}}
\newcommand{\mK}{\mathcal{K}}
\newcommand{\setX}{\mathcal{X}}
\newcommand{\setA}{\mathcal{A}}
\newcommand{\setB}{\mathcal{B}}
\newcommand{\setS}{\mathcal{S}}
\newcommand{\assign}{:=}
\newcommand{\ox}{\otimes}
\newcommand{\steph}[1]{}
\newcommand{\prabha}[1]{}
\begin{document}

\title{A transform of complementary aspects with applications to entropic uncertainty relations}

\author{Prabha Mandayam}
\email[]{prabhamd@caltech.edu}
\affiliation{Institute for Quantum Information, California Institute of Technology, Pasadena CA 91125, USA}
\author{Niranjan Balachandran}
\email[]{nbalacha@caltech.edu}
\affiliation{Department of Mathematics, California Institute of Technology, Pasadena CA 91125, USA}
\author{Stephanie Wehner}
\email[]{wehner@caltech.edu}
\affiliation{Institute for Quantum Information, California Institute of Technology, Pasadena CA 91125, USA}
\date{\today}
\begin{abstract}
Even though mutually unbiased bases and entropic uncertainty relations play an important role in quantum cryptographic
protocols they remain ill understood.
Here, we construct special sets of up to $2n+1$ mutually unbiased bases (MUBs) in dimension $d=2^n$
which have particularly beautiful symmetry properties derived from the Clifford algebra. More precisely, we show 
that there exists a unitary transformation that cyclically permutes such bases. This unitary can be understood
as a generalization of the Fourier transform, which exchanges two MUBs, to multiple 
complementary aspects.
We proceed to prove a lower bound for min-entropic entropic uncertainty relations for any set of MUBs,
and show that symmetry plays a central role in obtaining tight bounds.
For example, we obtain for the first time a tight bound for four MUBs in dimension $d=4$, which
is attained by an eigenstate of our complementarity transform.
Finally, we discuss the relation to other symmetries obtained by transformations in discrete phase space, and
note that the extrema of discrete Wigner functions are directly related to 
min-entropic uncertainty relations for MUBs.
\end{abstract}
\maketitle

\section{Introduction}
One of the central ideas of quantum mechanics is the uncertainty principle which was first proposed by Heisenberg~\cite{heisenberg:ur} for
two conjugate observables. 
Indeed, it forms one of the most significant examples showing that quantum mechanics does 
differ fundamentally from the classical world. 
Uncertainty relations today are probably best known
in the form given by Robertson~\cite{robinson:uncertainty}, who extended Heisenberg's result to two arbitrary observables
$A$ and $B$. Robertson's relation states that if we prepare many copies of the state $\ket{\psi}$, and measure
each copy individually using either $A$ or $B$, we have
\begin{align}\label{eq:heisenberg}
\Delta A \Delta B \geq \frac{1}{2} |\bra{\psi}[A,B]\ket{\psi}|
\end{align}
where $\Delta X = \sqrt{\bra{\psi}X^2\ket{\psi} - \bra{\psi}X\ket{\psi}^2}$ for $X \in \{A,B\}$ is
the standard deviation resulting from measuring $\ket{\psi}$ with observable $X$.
The essence of~\eqref{eq:heisenberg} is that quantum mechanics does not allow us to simultaneously
specify definite outcomes for two non-commuting observables when measuring the same state.
The largest possible lower bound in Robertson's inequality (\ref{eq:heisenberg})
is $1$, 
which happens if and only if $A$ and $B$ are related by a Fourier transform, that is, they are conjugate observables.

Of particular importance to quantum cryptography is the case where $A$ and $B$ correspond to measurements in two 
different orthonormal bases $\setA = \{\ket{a}\}_a$ and $\setB = \{\ket{b}\}_b$ in dimension $d$. 
If $\setA$ and $\setB$ are related by the Fourier
transform, for all basis vectors $\ket{a}$ of basis $\setA$ and all vectors $\ket{b}$ of basis $\setB$,
\begin{align}
|\inp{a}{b}|^2 = \frac{1}{d}\ .
\end{align}
Any two bases satisfying this property are called \emph{mutually unbiased bases}, or \emph{complementary aspects}, and the unitary that exchanges
two mutually unbiased bases can be understood as a Fourier transform. 
In the light of Robertson's uncertainty relation~\eqref{eq:heisenberg},
it seems that bases which are related by the Fourier transform
should play a special role in our understanding of quantum mechanics, in the sense that
they are the measurements which are most ``incompatible''.

However, nature typically allows us to perform more than two measurements on any given system, leading to the natural question
of how we can determine ``incompatibility'' between multiple measurements. Clearly, due to its use of the commutator relation, the lower bound 
of~\eqref{eq:heisenberg} most directly relates to the case of \emph{two} measurements. 
Is there a natural way of quantifying uncertainty for multiple measurements? And if so, what measurements might be most ``incompatible''?

\subsection{Entropic uncertainty relations}

A natural measure that captures the relations among the probability distributions over the outcomes
for each observable is the entropy of such distributions. 
This prompted Hirschmann to propose the first entropic 
uncertainty relation for position and momentum observables~\cite{hirschmann:ur}.
This relation was later improved by~\cite{becker:ur,BB:M}, where~\cite{BB:M} show that Heisenberg's uncertainty relation~\eqref{eq:heisenberg}
is in fact implied by this entropic uncertainty relation. Hence, using entropic quantities provides us with a much more general way 
of quantifying uncertainty. 
Indeed, it was realized by Deutsch~\cite{deutsch:ur} 
that other means of quantifying ``uncertainty'' are 
also desirable for another reason:
Note that the lower bound in~\eqref{eq:heisenberg} is trivial when $\ket{\psi}$ happens to give
zero expectation on $[A,B]$. Hence, it would be useful to have a
way of measuring ``incompatibility'' which depends only on the measurements
$A$ and $B$ and not on the state.
Deutsch~\cite{deutsch:ur} himself showed that
\begin{align}\label{eq:deutsch}
\frac{1}{2}\left(H_\infty(\setA|\ket{\psi}) + H_\infty(\setB|\ket{\psi})\right) \geq - \log\left(\frac{1 + c(\setA,\setB)}
{2}\right)
\end{align}
where $c(\setA,\setB) \assign \max\{|\inp{a}{b}| \mid \ket{a} \in \setA, \ket{b} \in \setB\}$,
and 
\begin{align}\label{eq:minEntropy}
H_\infty(\setA|\ket{\psi}) = - \log \max_{a} |\inp{a}{\psi}|^2
\end{align}
is the min-entropy arising from measuring
the pure state $\ket{\psi}$ using the basis $\setA$ (see Section~\ref{sec:URquantities} for more information on the entropic quantities we use).
If $\setA$ and $\setB$ are related by a Fourier transform, then the r.h.s. of~\eqref{eq:deutsch} becomes 
$- \log(1/2 + 1/(2\sqrt{d}))$, where the minimum is achieved by a state that is invariant under the Fourier transform.
Since the Shannon entropy obeys $H(\cdot) \geq H_\infty(\cdot)$, Deutsch's bound also holds for the Shannon entropy.
Better lower bounds have since been obtained for the Shannon entropy by Maassen and Uffink~\cite{maassen:ur} following a conjecture
of Kraus~\cite{kraus:ur}. Their uncertainty relations are again strongest (in the sense that the lower bound is largest) when
the bases $\setA$ and $\setB$ are conjugate, that is, the two bases are related by a Fourier transform.
Apart from their role in understanding the foundations of quantum mechanics, these uncertainty relations play a central role
in cryptography in the noisy-storage model~\cite{prl:noisy,arxiv:robust,noisy:new, serge:bounded,serge:new}, quantum key
distribution~\cite{qkd:ur, joeRenes:ur},
information locking~\cite{barbara:locking}, and the question of separability~\cite{guehne:separable}. In particular, such relations have practical interest in noisy-storage cryptography~\cite{noisy:new} where they may enable us to prove security for a larger class quantum memories.

Here, we are concerned with measurements in multiple bases $\setB_0,\ldots,\setB_{L-1}$.
Entropic uncertainty relations provide a natural way of quantifying ``incompatibility'' of more than two measurements by lower bounding
\begin{align}\label{eq:H1general}
\frac{1}{L} \sum_{j=0}^{L-1} H(\setB_j|\ket{\psi}) \geq c_L\ ,
\end{align}
for all states $\ket{\psi}$. We call a state $\ket{\psi}$ that minimizes the average sum of entropies a \emph{maximally certain state}.
When $H$ is the Shannon entropy, the largest bound we can hope to obtain for any choice of bases is 
\begin{align}\label{eq:H1maxBound}
c_L = \frac{L-1}{L}\ \log d\ ,
\end{align}
since choosing $\ket{\psi}$ to be an element of one of the bases yields zero entropy when we subsequently measure in the same basis.
If~\eqref{eq:H1maxBound} is indeed a lower bound to~\eqref{eq:H1general}, we will call the measurements \emph{maximally incompatible with respect to the Shannon entropy}.
Note that this can only happen if $\ket{\psi}$ gives us full entropy~\footnote{Note that the entropy of performing a measurement corresponding to an orthonormal basis in dimension
$d$ can never exceed $\log d$, where the maximum is attained when the distribution over the outcomes is uniform $1/d$.} 
when measured in any other basis, that is, the bases are all mutually unbiased.

Curiously, however, it was shown that whereas being mutually unbiased is necessary, it is not a 
sufficient condition to obtain maximally strong uncertainty relations for the Shannon entropy~\cite{BallesterWehner}.
In particular, there do exist large sets of up to $\sqrt{d}$ mutually unbiased bases in square dimensions
for which we do obtain very weak uncertainty relations~\cite{BallesterWehner}. 
Recently, Ambainis~\cite{andris} has shown that 
for any three bases from the ``standard'' mutually unbiased bases construction~\cite{wootters:mub,boykin:mub} in
prime dimension, the lower bound cannot exceed $\left( \frac{1}{2}+o(1) \right) \log d$,
for large dimensions. For dimensions of the form $4k+3$ and $8k+5$ no further assumption is needed, but the proof 
assumes the Generalized Riemann Hypothesis for dimensions of the form $8k+1$.
Furthermore, for any $0\leq \epsilon\leq 1/2$, there always exist $k=d^\epsilon$ of these bases such
that the lower bound cannot be larger than $\left( \frac{1}{2}+\epsilon+o(1) \right)\log d$.
Only if we use the maximal set of $d+1$ mutually unbiased bases that can be found for any given prime power dimension, do we
obtain quite strong uncertainty relations~\cite{sanchez:old, ivanovic:ur}. 

At present, we merely know that there do exist arbitrarily large sets of two outcome measurements that give us maximally strong uncertainty relations~\cite{ww:cliffordUR}, and that in larger dimensions
selecting a large amount of bases at random does provide us with strong relations~\cite{andreas:random} (for a survey see~\cite{ww:urSurvey}).
Indeed, it remains an intriguing open question as to whether there even exist three measurements with three outcomes in dimension $d > 2$ that are maximally incompatible with respect to the Shannon entropy.

\subsection{Mutually unbiased bases}

In the light of these questions, it is therefore natural to study the structure of mutually unbiased bases(MUBs) to see whether
we can identify additional properties which are sufficient for obtaining strong uncertainty relations.
In~\cite{wootters:mur}, Wootters and Sussman made the interesting observation that for the maximal set of $d+1$ mutually unbiased bases coming
from such constructions as~\cite{wootters:mub, boykin:mub} in dimension $d=2^n$, the lower bound of the entropic uncertainty relation in terms of the collision
entropy given in~\cite{BallesterWehner} is tight, and the minimum is attained by a state that is invariant under a unitary that cyclically 
permutes the set of all $d+1$ MUBs. A similar unitary was noted to exist by Chau~\cite{chau:cycle}.
Wootters and Sussman derive their transformation from phase space arguments. Their unitary can in fact easily be generalized to cyclically permute $L$ bases, whenever $L$ divides $d+1$ (see Section~\ref{sec:URdwgGeneral}). The results in~\cite{wootters:mur} have recently been generalized by Appleby~\cite{appleby:cycle}, who shows that in prime power dimensions of the form $d= 1 \mbox{ or } 3 \mod 4$, there exists a unitary operation that cyclically permutes the first and second halves of the full set of MUBs. This raises the pressing question of whether smaller sets of MUBs also exhibit such symmetries? 
And can we exploit such symmetries to obtain tight uncertainty relations? In particular, is the minimizing state always an invariant of such a transformation 
as observed for two bases in~\eqref{eq:deutsch}\ ?

{\bf Main result}
We first show by an explicit construction that there exist sets of $2 \leq L \leq 2n+1$ mutually unbiased bases in dimension $d=2^n$ with the property that there exists a unitary that cyclically permutes all bases in this set, whenever (a) $L$ is prime, and (b) $L$ divides $n$ or $L=2n+1$. More specifically, we provide an explicit construction of MUBs $\setB_0,\ldots,\setB_{L-1}$ with $\setB_j = \{\ket{b^{(j)}}\}_b$ and a unitary
$U$ such that
\begin{align}\label{eq:basisPerm}
U \proj{b^{(j)}} U^\dagger &= \proj{b^{(j + 1 \mod L)}}\\
 &\mbox{ for all } \ket{b^{(j)}} \in \setB_j\ .\nonumber
\end{align}
Furthermore, in dimension $d=4$, we actually find such a unitary for any set of $L$ MUBs, where $2\leq L \leq 5$.
Our approach exploits properties of the Clifford algebra, which might yield new insights into the structure of these MUBs. 
It is entirely distinct from the phase space approach which was
used to construct such a unitary for the full set of $d+1$ MUBs~\cite{wootters:mub}. 
Note that our construction gives at most $O(\log d)$ bases, but 
shows that there is indeed an additional symmetry which has previously gone unnoticed.
For $L=2$ bases, $U$ is simply the Fourier transform, and it would be interesting to investigate
general properties of our transformation and whether it has applications in other areas.

\subsection{Min-entropic uncertainty relations}

We then apply our transformation to the study of uncertainty relations in terms of the \emph{min-entropy} (see~\eqref{eq:minEntropy}).
Since $H(\cdot) \geq H_\infty(\cdot)$, this also provides us with bounds on uncertainty relations in terms of the Shannon entropy.
Of course, many forms of entropy could be considered when it comes to quantifying uncertainty, and each has its merits. 
The min-entropy is 
of particular interest in cryptography, and is also related to the well studied extrema of the discrete Wigner function as we will discuss
in Section~\ref{sec:URdiscreteWigner}. In particular, it will be easy to see that the average min-entropy for the full set of
$L = d + 1$ MUBs can be bounded as
\begin{align}\label{eq:minEntropyWigner}
\frac{1}{d+1} \sum_{j=0}^{d} H_{\infty}(\setB_j|\ket{\psi}) \geq - \log \left[d \cdot \left(\max_{\alpha} W_\alpha^{\rm max} + 1\right)
\right]\ ,
\end{align}
where $W_\alpha^{\rm max}$ is the maximum value of the discrete Wigner function at the point $\alpha$ in discrete phase space.
Symmetries thereby play an important role in determining $W^{\rm max}_{\alpha}$.

{\bf Second result} We prove a simple min-entropic uncertainty relation for an arbitrary set of $L$ mutually unbiased bases.
For MUBs $\setB_0,\ldots,\setB_{L-1}$ we obtain
\begin{align}\label{eq:URstatement}
\frac{1}{L} \sum_{j=0}^{L-1} H_\infty(\setB_j|\ket{\psi}) 
\geq - \log \left[\frac{1}{L}\left(1 + \frac{L-1}{\sqrt{d}}\right)\right]. 
\end{align}
For the case of $2$ MUBs in dimension $d$, this bound is indeed the same as the bound in~\eqref{eq:deutsch}. For any small set of $2<L<d$ MUBs, our bound matches the strongest known bound~\cite{chris:diss}. We also prove the following alternate lower bound:
\begin{align}\label{eq:URstatement2}
\frac{1}{L} \sum_{j=0}^{L-1} H_\infty(\setB_j|\ket{\psi}) 
\geq - \log \left[\frac{1}{d}\left(1 + \frac{d-1}{\sqrt{L}}\right)\right],
\end{align}
which is stronger than~\eqref{eq:URstatement} for the complete set of $L = d+1$ MUBs in dimension $d$. Clearly, when $L=d$, the two bounds are equivalent. 

We further show that~\eqref{eq:URstatement} is in fact tight for $L=4$ MUBs in dimension $d=4$ stemming from our construction, where the minimum is 
attained for an invariant state of the transformation $U$ that cyclically permutes all $4$ bases. Even though this is a somewhat restricted
statement, it is the first time that a tight entropic uncertainty relation has been obtained for this case. The minimizing
state here has an appealing symmetry property, just as for the case of $L=2$ bases in~\eqref{eq:deutsch} where the minimum
is attained by a state that is invariant under the Fourier transform. 

For the collision entropy $H_2$, Wootters~\cite{wootters:mur} previously showed that the lower bound from~\cite{BallesterWehner} is attained
by an invariant state when considering the full set of $d+1$ MUBs. Here, however, we 
exhibit a tight uncertainty relation for these $L=3$ bases in $d=4$ for the collision entropy $H_2$ which has an entirely different
structure and the minimum is not attained by an invariant state of our transformation. 
Nevertheless, we have for the first time a \emph{tight} entropic
uncertainty relation for \emph{all} possible MUBs in a dimension larger than the trivial case of $d=2$ where the Bloch sphere
representation makes the problem easily accessible. In $d=4$, we have a tight relation for $H_\infty$ for $L=2,4$, and tight
relations for $H_2$ for $L=3,5$. 

Our result indicates that due to the different properties of the minimizing 
state for different numbers of bases, the problem may be even more daunting than previously imagined. Yet, our work shows that
in each case the minimizing state is by no means arbitrary. It has a well defined (albeit different) structure in each of the cases.

{\bf Third result} For some sets of MUBs we do obtain for the first time, significant insight into the structure of the maximally certain states.
In particular, we note in Section~\ref{sec:symmetry} that for $L$ mutually unbiased bases that the state that minimizes the min-entropic uncertainty relations is
an invariant of a certain unitary whenever $L$ divides $d+1$ for $d=2^n$. 

\section{Symmetric MUBs}\label{sec:defs}

Before explaining our construction of mutually unbiased bases for which there exists a unitary that cyclically permutes them, 
let us define the notions of MUBs more formally and recall some known facts.
Let $\setB_1 = \{\ket{0^{(1)}},\ldots,\ket{d-1^{(1)}}\}$ and $\setB_2 =
\{\ket{0^{(2)}},\ldots,\ket{d-1^{(2)}}\}$ be two orthonormal bases in
$\Complex^d$. They are said to be
\emph{mutually unbiased} if
$|\inp{a^{(1)}}{b^{(2)}}| = 1/\sqrt{d}$, for all $a,b \in \{0,\ldots,d-1\}$. 
A set $\{\setB_0,\ldots,\setB_{L-1}\}$ of
orthonormal bases in $\Complex^d$ is called a \emph{set of mutually
unbiased bases} if each pair of bases is mutually unbiased.
For example, the well-known computational and Hadamard basis are mutually unbiased.
We use $N(d)$ to denote the maximal number of MUBs in dimension $d$.
In any dimension $d$, we have that
$N(d) \leq d+1$~\cite{boykin:mub}. If $d = p^k$ is a prime power, we have $N(d) = d+1$ and explicit constructions are
known~\cite{boykin:mub,wootters:mub}. Other constructions are known that give
less than $d+1$ MUBs in other dimensions~\cite{wocjan:mub,zauner:diss,klappenecker:mubs, grassl:mub}. 
However, it is still an open problem whether there exists a set of $7$ 
(or even $4$!) MUBs in dimension $d=6$.

\subsection{Clifford algebra}

Our construction of mutually unbiased bases makes essential use of the techniques developed in~\cite{boykin:mub},
together with properties of the Clifford algebra.
The Clifford algebra is the associative algebra generated by operators $\Gamma_0,\ldots,\Gamma_{2n-1}$ satisfying
$\{\Gamma_i,\Gamma_j\} = 0 \mbox{ for } i\neq j$
and $\Gamma_i^2 = \id$.
It has a unique representation by
Hermitian matrices on $n$ qubits (up to unitary equivalence) that can
be obtained via the famous Jordan-Wigner transformation~\cite{JordanWigner}:
\begin{align}
  \Gamma_{2j+1} &= Y^{\ox(j-1)} \ox Z \ox \id^{\ox(n-j)}, \\
  \Gamma_{2j}   &= Y^{\ox(j-1)} \ox X \ox \id^{\ox(n-j)},
\end{align}
for $j=0,\ldots,n-1$, where we use $X$, $Y$ and $Z$ to denote the Pauli matrices.
Furthermore, we let 
\begin{align}
\Gamma_{2n} := i \Gamma_0\ldots\Gamma_{2n-1}\ .
\end{align}
Note that in dimension $d=2$ these are just the familiar Pauli matrices, $\Gamma_0 = X$, $\Gamma_1 = Z$ and 
$\Gamma_2 = Y$.

Of particular importance to us will be the fact that we can view the operators $\Gamma_0,\ldots,\Gamma_{2n-1}$, as
$2n$ orthogonal vectors forming a basis for $\Real^{2n}$. In particular, for any orthonormal transformation
$T \in \text{O}(2n)$ which when applied to the vector $v =(v^{(0)},\ldots,v^{(2n-1)})\in \Real^{2n}$ 
gives $\tilde{v} = (\tilde{v}^{(1)}, \ldots,\tilde{v}^{(2n-1)}) = T(v)$, there exists a unitary 
$U(T)$ such that
\begin{align}
U(T)\left(\sum_j v_j \Gamma_j\right) U(T)^{\dagger} = \sum_j \tilde{v}_j \Gamma_j\ .
\end{align}
The orthonormal transformation that is particularly interesting to us here is the one that cyclically permutes
the basis vectors.
As described above we can find a corresponding unitary $U = U(T)$ which 
cyclically permutes the basis vectors $\Gamma_0,\Gamma_2,\ldots, \Gamma_{L-1}$.
An explicit construction can be found in the appendix.
This symmetry can be extended to $\text{SO}(2n+1)$, see e.g.~\cite{ww:cliffordUR}. 
It will also be useful that the set of $d^{2}$ operators 
\begin{align}
\setS = \{\id,\Gamma_j, i \Gamma_i \Gamma_j, \Gamma_i\Gamma_j\Gamma_k,\ldots,
i \Gamma_1\ldots\Gamma_{2n}\} 
\end{align}
forms an orthogonal basis~\footnote{Orthogonal with respect to the Hilbert-Schmidt inner product.} for $d\times d$ Hermitian matrices in $d=2^n$~\cite{dietz:blochsphere}.

\subsection{Construction}
To construct mutually unbiased bases, we follow the procedure outlined in~\cite{boykin:mub}, but now applied
to a subset of the operators in $\setS\setminus\{\id\}$.
That is, we will 
group operators into classes of commuting operators, i.e., sets $\{\mathcal{C}_{0},\mathcal{C}_{1},\ldots,\mathcal{C}_{L-1} 
\mid \mathcal{C}_j \subset \setS\setminus\{\id\}\}$ of size $|\mathcal{C}_j| = d-1$
such that
\begin{enumerate}
\item[(i)] the elements of $\mathcal{C}_{j}$ commute for all $0 \leq j \leq L-1$,
\item [(ii)] $\mathcal{C}_{j}\cap\mathcal{C}_{k} = \emptyset$ for all $j\neq k$. 
\end{enumerate}
It has been shown in~\cite{boykin:mub} that the common eigenbases of such classes form a set of $L$ MUBs.

First of all, note that no class can contain two generators $\Gamma_j$ and $\Gamma_k$ since they do not commute. 
When forming the 
classes we hence ensure that 
each one contains exactly one generator $\Gamma_j$, 
which clearly limits us to constructing at most $2n+1$ such classes.
The difficulty in obtaining a partitioning that is suitable for our purpose is to ensure that the 
unitary $U$ that cyclically permutes the generators $\Gamma_0,\ldots,\Gamma_{L-1}$ also permutes the corresponding
bases by permuting products of operators appropriately.
We show in the appendix that our general construction achieves the following:
\begin{theorem}
Suppose that $2 \leq L \leq 2n+1$ is prime, and either $L$ divides $n$ or $L=2n+1$.
Then in dimension $d = 2^n$, there exist $L$ mutually unbiased bases $\mB_0,\ldots,\mB_{L-1}$ for which there exists a unitary $U$ that
cyclically permutes them
\begin{align}\label{eq:Udefine}
U \mB_j = \mB_{j + 1 \mod L}\ .
\end{align}
\end{theorem}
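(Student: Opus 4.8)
The plan is to reduce the theorem to a combinatorial problem about the operator set $\setS$ and to solve it using the symplectic structure of the Clifford algebra. By the cited result of Boykin et al., it suffices to exhibit classes $\mathcal{C}_0,\dots,\mathcal{C}_{L-1}\subset\setS\setminus\{\id\}$, each consisting of $d-1$ mutually commuting operators and containing exactly one generator $\Gamma_j$, that are pairwise disjoint and are cyclically permuted by the Clifford unitary of an index shift; their common eigenbases $\mB_j$ are then MUBs. First I would fix a permutation $\sigma$ of the generators, of order $L$, whose restriction to $\{0,\dots,L-1\}$ is the cyclic shift $\Gamma_j\mapsto\Gamma_{j+1\bmod L}$: for $L=2n+1$ the full cycle on $\{0,\dots,2n\}$, using the extension of the Clifford symmetry to $\mathrm{SO}(2n+1)$ so that $\Gamma_{2n}$ participates, and for $L\mid n$ a product of $2n/L$ disjoint $L$-cycles, one of which is $(0\,1\,\cdots\,L-1)$. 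Since $\sigma$ is orthogonal, the excerpt supplies a unitary $U=U(\sigma)$ with $U\Gamma_jU^\dagger=\Gamma_{\sigma(j)}$, and because conjugation by $U$ sends a product of generators to the same product with indices permuted by $\sigma$, it maps $\setS$ to itself up to signs that are irrelevant for eigenbases.

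I would then build a single seed class $\mathcal{C}_0$ containing $\Gamma_0$ and set $\mathcal{C}_j:=\sigma^j(\mathcal{C}_0)$. Because $\sigma^L=\mathrm{id}$, the relation $U\mathcal{C}_jU^\dagger=\mathcal{C}_{j+1\bmod L}$ then holds automatically, so a joint eigenvector of $\mathcal{C}_j$ is carried by $U$ to a joint eigenvector of $\mathcal{C}_{j+1}$, which is precisely the claim $U\mB_j=\mB_{j+1\bmod L}$. Two of the required properties are essentially free. For size and closure, any $2^n-1$ distinct commuting elements of $\setS$ together with $\id$ are forced to form a maximal abelian subgroup of order $2^n$: they are pairwise distinct modulo phase, they commute and hence span an isotropic subspace of dimension at most $n$, so $2^n$ distinct such elements must exhaust it, and the common eigenbasis is nondegenerate. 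For disjointness, $\mathcal{C}_j\cap\mathcal{C}_k=\sigma^j(\mathcal{C}_0\cap\sigma^{k-j}\mathcal{C}_0)$, so the family is pairwise disjoint exactly when $\mathcal{C}_0$ contains no $\sigma$-fixed operator and meets each $\sigma$-orbit at most once.

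The crux, and the step I expect to be the main obstacle, is the existence of such a seed. Identifying each non-identity operator of $\setS$ with a nonzero vector of $\mathbb{F}_2^{2n}$, the commutation rule $|S|\,|T|\equiv|S\cap T|\pmod 2$ becomes a nondegenerate alternating form, the commuting classes become maximal isotropic ($n$-dimensional) subspaces, and $\sigma$ becomes a symplectic map of order $L$. Since $L$ is prime, every $\sigma$-orbit of operators has size $1$ or $L$, so the disjointness requirement is a transversal statement: I must produce a Lagrangian that avoids the $\sigma$-fixed subspace and meets each remaining orbit at most once. To do so I would use the $\mathbb{F}_2[\sigma]=\mathbb{F}_2[x]/(x^L-1)$ module structure of $\mathbb{F}_2^{2n}$, which is semisimple because $\gcd(L,2)=1$ and decomposes according to the factorization of $x^L-1$ over $\mathbb{F}_2$ into extension fields $\mathbb{F}_{2^{\mathrm{ord}_L(2)}}$. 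The two admissible regimes are exactly those in which this decomposition has the right shape: when $L=2n+1$ the fixed subspace is trivial and $\sigma$ acts fixed-point-freely as an order-$L$ element, and when $L\mid n$ the nontrivial isotypic summands organise into hyperbolic blocks transverse to their $\sigma$-translates, so that an orbit-transversal Lagrangian containing $\Gamma_0$ can be written down explicitly. Verifying that this subspace is simultaneously isotropic and orbit-transversal is the delicate point; the remaining claims — commutativity, size, disjointness, and $\mathcal{C}_L=\mathcal{C}_0$ — then follow immediately.
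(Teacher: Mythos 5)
Your reduction is the same as the paper's: invoke the Bandyopadhyay--Boykin--Roychowdhury--Vatan result, build a single seed class $\mathcal{C}_0$ containing $\Gamma_0$, generate $\mathcal{C}_j$ by iterating the index permutation $\sigma$ (a full cycle for $L=2n+1$, a product of $2n/L$ disjoint $L$-cycles for $L\mid n$), and realize $\sigma$ by a Clifford unitary $U(\sigma)$. Your observations that $(\mathbf{P3})$ is automatic, that $2^n-1$ distinct commuting elements of $\setS$ necessarily exhaust a maximal isotropic subspace, and that pairwise disjointness is equivalent to the seed subspace $W$ satisfying $W\cap\sigma^k(W)=\{0\}$ for $k\not\equiv 0\bmod L$ are all correct and correctly locate the crux of the theorem.

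The problem is that you then do not prove the crux. The entire content of the paper's proof is the explicit construction of $\mathcal{C}_0$ and the verification of disjointness, which it carries out via two combinatorial lemmas: length-$2$ products are separated because the ``spacing'' $(j-i)\bmod L$ is a $\sigma$-invariant taking distinct values on the chosen pairs, and length-$\ell$ products are separated because the index sum mod $L$ shifts by $k\ell$ under $\sigma^k$, so a collision would force $(k-k')\ell\equiv 0\bmod L$, impossible for $L$ prime and $\ell\le L-1$ (with extra care in the $L\mid n$ case to split operators by which block $\mG^{(i)}$ their factors come from and to cap each block's contribution at length $L-1$). Your substitute --- semisimplicity of $\mathbb{F}_2[x]/(x^L-1)$ and an appeal to ``hyperbolic blocks transverse to their $\sigma$-translates'' --- is a plausible strategy but is asserted, not executed; you yourself flag the isotropy-plus-orbit-transversality verification as ``the delicate point.'' Semisimplicity alone does not produce a Lagrangian $W$ with $W\cap\sigma^kW=\{0\}$ (note $W$ must in particular be very far from $\sigma$-invariant), and in the $L=2n+1$ case the cycle involving $\Gamma_{2n}=i\Gamma_0\cdots\Gamma_{2n-1}$ does not act as a coordinate permutation of $\mathbb{F}_2^{2n}$, so the clean module picture you describe needs additional justification there. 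As it stands the proposal is a correct reduction plus an unproved existence claim, i.e., it has a genuine gap exactly where the theorem's difficulty lies.
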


\subsection{Examples}

Let us consider two simple examples of such classes in dimension $d=4$. These are not obtained from our general construction, but nevertheless
provide us with the necessary intuition. 
For $L=3$ MUBs the classes are given by
\begin{eqnarray}\label{eq:classd3}
\mathcal{C}_{0} &=& \{\Gamma_{0}, i\Gamma_{1}\Gamma_{4}, i\Gamma_{3}\Gamma_{2} \} \nonumber \\
\mathcal{C}_{1} &=& \{\Gamma_{1}, i\Gamma_{2}\Gamma_{4}, i\Gamma_{3}\Gamma_{0} \} \nonumber \\
\mathcal{C}_{2} &=& \{\Gamma_{2}, i\Gamma_{0}\Gamma_{4}, i\Gamma_{3}\Gamma_{1} \} 
\end{eqnarray}
It is easy to see that the unitary $U$ that achieves the transformation 
$\Gamma_{0} \rightarrow \Gamma_{1} \rightarrow \Gamma_{2} \rightarrow \Gamma_{0}$,
but leaves $\Gamma_3$ and $\Gamma_{4}$ invariant, cyclically permutes the bases given above.
For the collision entropy $H_2$ the minimum is attained for an eigenstate of the \emph{commuting}
operators $\Gamma_0$, $i\Gamma_2 \Gamma_{4}$ and $i \Gamma_3 \Gamma_1$. 
This minimizing state also shows that the wellknown bound for $H_2$ (see e.g.~\cite{ww:urSurvey}) can be \emph{tight}. 

For $L=4$ MUBs we obtain the classes
\begin{eqnarray}\label{eq:classd4}
\mathcal{C}_{0} &=& \{\Gamma_{0}, i\Gamma_{1}\Gamma_{4}, i\Gamma_{2}\Gamma_{3} \} \nonumber \\
\mathcal{C}_{1} &=& \{\Gamma_{1}, i\Gamma_{2}\Gamma_{4}, i\Gamma_{3}\Gamma_{0} \} \nonumber \\
\mathcal{C}_{2} &=& \{\Gamma_{2}, i\Gamma_{3}\Gamma_{4}, i\Gamma_{0}\Gamma_{1} \} \nonumber \\
\mathcal{C}_{3} &=& \{\Gamma_{3}, i\Gamma_{0}\Gamma_{4}, i\Gamma_{1}\Gamma_{2} \}
\end{eqnarray}
It is easy to see that the unitary $U$ that achieves the transformation 
$\Gamma_{0} \rightarrow \Gamma_{1} \rightarrow \Gamma_{2} \rightarrow \Gamma_{3} \rightarrow \Gamma_{0}$, 
but leaves $\Gamma_4$ invariant, cyclically permutes the bases given above.
For $L=4$ classes the minimum in the entropic uncertainty relation for $H_\infty$ is attained
for a state that is invariant under the transformation $U$.
However, we also know that for $L=4$ or $L=8$ classes in dimension $d=8$ no partitioning of operators is possible
that satisfies our requirements. Thus, for what values of $L$ and $d$ such a unitary can be found, remains an interesting open question.

\section{Uncertainty relations}\label{sec:URrelations}

We now investigate the relationship between the observed symmetries and entropic uncertainty relations.

\subsection{Entropic quantities}\label{sec:URquantities}

Before comparing different uncertainty relations, we first provide a short introduction to all the entropic quantities we will use.
The expert reader may safely skip this section. In general, 
the \emph{R{\'e}nyi entropy}~\cite{renyi:entropy} of order $\alpha$ of the distribution obtained by measuring a state $\ket{\psi}$
in the basis $\setB = \{\ket{b}\}_b$ is given by
\begin{align}\label{eq:RenyiGeneral}
H_\alpha(\setB|\ket{\psi}) = \frac{1}{1-\alpha}\log\left[\left(\sum_{b \in \setB}(|\inp{b}{\psi}|^2)^\alpha\right)^\frac{1}{\alpha-1}\right].
\end{align}
Indeed, the Shannon entropy forms a special case of the R{\'e}nyi entropy by taking the limit $\alpha \rightarrow 1$, i.e.,
$H_1(\cdot) = H(\cdot)$, where we omit the subscript.
Of particular importance are the \emph{min-entropy},
for $\alpha \rightarrow \infty$:\label{def:minentropy}
\begin{align}
H_\infty(\setB|\ket{\psi}) = - \log\left(\max_{b \in \setB} |\inp{b}{\psi}|^2\right)\ ,
\end{align}
and the \emph{collision entropy}\label{def:collision}
\begin{align}
H_2(\setB|\ket{\psi}) = - \log \sum_{b \in \setA} (|\inp{b}{\psi}|^2)^2\ .
\end{align}
We have
\begin{align}
\log d\geq H(\cdot) \geq H_2(\cdot) \geq H_\infty(\cdot) \geq 0\ ,
\end{align}
and hence uncertainty relations for $H_\alpha$ also provide us with a bound on uncertainty relations for $H_\beta$ whenever
$\alpha \geq \beta$.

Note that intuitively, the min-entropy is determined by the highest peak in the distribution and most closely captures 
the notion of ``guessing''. To see why it is a more useful quantity in cryptography than the Shannon entropy, consider the following
example distribution $P_X$:
Let $\setX = \01^n$ and let $x_0 = 0,\ldots,0$ be the all 0 string. Suppose that $P_X(x_0) = 1/2 + 1/(2^{n+1})$
and $P_X(x) = 1/(2^{n+1})$ for $x \neq x_0$, i.e., with probability $1/2$ 
we choose $x_0$ and with probability $1/2$ we choose one string
uniformly at random. Then $H(X) \approx n/2$, whereas $H_\infty(X) \approx 1$! If $x$ would correspond to an encryption
key used to encrypt an $n$ bit message, 
we would certainly not talk about security if we can guess the key with probability at least $1/2$ ! Yet, the Shannon entropy is quite high.

\subsection{Min-entropy and symmetry}\label{sec:URmin-entropy}

Apart from its cryptographic applications, min-entropic uncertainty relations are appealing since 
the problem of determining tight uncertainty relations can be simplified considerably in the presence of symmetries.
Furthermore, these relations bear an interesting relation to the extrema of the discrete Wigner function.
First of all, note that for the min-entropy we have by Jensen's inequality that
\begin{align}\label{eq:minEntropyBound}
&\frac{1}{L} \sum_{j=0}^{L-1} H_\infty(\mB_j|\rho) \\
 &\qquad\geq - \log \frac{1}{L} \sum_{j=0}^{L-1} \max_{b^{(j)}} \tr(\rho \proj{b^{(j)}})\, 
\label{eq:minEntropyBoundLower}
\end{align}
where the inequality becomes equality if all terms $\tr(\rho \proj{b^{(j)}})$ are the same. For $\vec{b} = (b^{(0)},\ldots,b^{(L-1)}) \in 
\{0,\ldots,d-1\}^{\times L}$, define
\begin{align}\label{eq:pVecs}
P_{\vec{b}} := \sum_{b^{(j)}} \proj{b^{(j)}}\ .
\end{align}
Note that determining a tight lower bound to~\eqref{eq:minEntropyBoundLower} is thus equivalent to determining
\begin{align}
\max_{\vec{b}} \max_\rho \tr(\rho P_{\vec{b}})\ .
\end{align}
Clearly, any $\zeta$ such that 
\begin{align}\label{eq:zetaBound}
P_{\vec{b}} \leq \zeta \id \mbox{ for all } \vec{b}
\end{align}
thus gives us a lower bound for~\eqref{eq:minEntropyBound}.
For any set of bases, this makes the problem of finding a bound more approachable as it reduces the problem to finding the largest eigenvalue for any
operator $P_{\vec{b}}$. In particular, it can be phrased as a semidefinite program to minimize $\zeta$ such that~\eqref{eq:zetaBound}
holds for all $\vec{b}$.

\subsubsection{Symmetries}\label{sec:symmetry}

It is now easy to see why symmetries simplify our goal 
of determining tight uncertainty relations.

\begin{lemma}\label{lem:symmetry}
Suppose that for every $\vec{b} \in \{0,\ldots,d-1\}$ there exists
a unitary $U_{\vec{b}}$ such that $U_{\vec{b}} \ket{b^{(j)}} = \ket{b^{(j + 1 \mod L)}}$. Then
there exists a $\vec{b'}$ such that
the minimum in~\eqref{eq:minEntropyBound} is attained for a state $\rho$ that is invariant
under $U_{\vec{b'}}$.
\end{lemma}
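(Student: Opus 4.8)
The plan is to reduce the averaged min-entropy to a single largest-eigenvalue problem and then use the symmetry $U_{\vec b}$ to pick a maximizing eigenvector that is already invariant. The one elementary fact that drives everything is that $U_{\vec b}$ merely permutes the rank-one projectors constituting $P_{\vec b}$ in~\eqref{eq:pVecs}:
\begin{align}
U_{\vec b}\, P_{\vec b}\, U_{\vec b}^{\dagger} = \sum_{j} \proj{b^{(j+1 \mod L)}} = P_{\vec b}\ ,
\end{align}
so that $U_{\vec b}$ commutes with $P_{\vec b}$ for every $\vec b$.

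First I would rewrite the right-hand side of~\eqref{eq:minEntropyBound}. For a fixed $\rho$ the index $b^{(j)}$ in each basis may be chosen independently, so $\sum_{j}\max_{b^{(j)}}\tr(\rho\proj{b^{(j)}}) = \max_{\vec b}\tr(\rho P_{\vec b})$, and minimizing over $\rho$ gives $\min_{\rho}\frac1L\sum_j H_\infty(\mB_j|\rho)\ \ge\ -\log\!\big(\tfrac1L\max_{\vec b}\lambda_{\max}(P_{\vec b})\big)$, where $\lambda_{\max}(\cdot)$ is the largest eigenvalue. I would then fix $\vec b'$ attaining $\max_{\vec b}\lambda_{\max}(P_{\vec b})$ and let $V$ be the corresponding top eigenspace of $P_{\vec b'}$. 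Since $U_{\vec b'}$ commutes with $P_{\vec b'}$, it preserves $V$; restricted to $V$ it is a unitary on a nonzero finite-dimensional complex space and hence has an eigenvector $\ket\phi\in V$, say $U_{\vec b'}\ket\phi = e^{i\theta}\ket\phi$. Putting $\rho=\proj\phi$, the state $\rho$ is invariant under $U_{\vec b'}$ and satisfies $\tr(\rho P_{\vec b'}) = \lambda_{\max}(P_{\vec b'})$, so it saturates the lower bound above.

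Finally I would verify that this $\rho$ actually attains the minimum of~\eqref{eq:minEntropyBound}, i.e. that the Jensen step is tight. Invariance supplies exactly this: from $\ket{b'^{(j+1)}} = U_{\vec b'}\ket{b'^{(j)}}$ and $U_{\vec b'}^{\dagger}\ket\phi = e^{-i\theta}\ket\phi$ one finds $\tr(\rho\proj{b'^{(j+1)}}) = |\inp{b'^{(j)}}{\phi}|^2 = \tr(\rho\proj{b'^{(j)}})$, so all $L$ overlaps coincide and equal $\lambda_{\max}(P_{\vec b'})/L$. Because $\vec b'$ is globally optimal it is in particular the index-maximizing string for this $\rho$, so $\max_{b^{(j)}}\tr(\rho\proj{b^{(j)}})$ equals this common value for each $j$; substituting into the left-hand side of~\eqref{eq:minEntropyBound} yields exactly $-\log(\tfrac1L\lambda_{\max}(P_{\vec b'}))$, matching the bound. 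Hence $\rho$ is a minimizer invariant under $U_{\vec b'}$, which is the claim.

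This last step is the one I expect to require the most care, since the averaged min-entropy is a sum of logarithms rather than $-\log\tr(\rho P_{\vec b'})$: producing a genuinely optimal invariant state, and not merely one saturating the Jensen lower bound, relies on the two facts combining cleanly — that invariance equalizes the $L$ overlaps, and that the eigenvalue-maximizing string $\vec b'$ remains optimal for the very state it generates. The commutation relation is precisely what guarantees that symmetrizing onto an eigenvector of $U_{\vec b'}$ does not move us off the optimum.
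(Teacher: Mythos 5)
Your proof is correct and follows essentially the same route as the paper: both exploit $U_{\vec{b'}}P_{\vec{b'}}U_{\vec{b'}}^{\dagger}=P_{\vec{b'}}$ to produce an invariant state achieving $\lambda_{\max}(P_{\vec{b'}})$, and both close the argument by noting that invariance equalizes the $L$ overlaps so that the Jensen step becomes tight. The only (cosmetic) difference is that you take a joint eigenvector of $U_{\vec{b'}}$ and $P_{\vec{b'}}$ in the top eigenspace, whereas the paper twirls the top eigenvector as $\rho_{\rm sym}=\frac{1}{L}\sum_j (U_{\vec{b'}}^j)^{\dagger}\rho\, U_{\vec{b'}}^j$; your variant has the mild advantage of yielding a pure invariant state directly.
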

\begin{proof}
First of all, note that 
\begin{align}
\frac{1}{L} \sum_{j=0}^{L-1} (U_{\vec{b}}^j) P_{\vec{b}} (U_{\vec{b}}^j)^{\dagger} = P_{\vec{b}}\ ,
\end{align}
and hence for $\rho_{\rm sym} = (1/L) \sum_j (U_{\vec{b}}^j)^\dagger \rho (U_{\vec{b}}^j)$
\begin{align}
\tr(\rho_{\rm sym} P_{\vec{b}}) = \tr(\rho P_{\vec{b}})\ .
\end{align}
In particular, this holds for the state $\rho = \proj{\psi}$ corresponding
to the eigenvector $\ket{\psi}$ with the largest eigenvalue of $P_{\vec{b'}}$. 
When looking for the minimizing state on the r.h.s of~\eqref{eq:minEntropyBound} we can thus restrict ourselves to states which 
are invariant under $U_{\vec{b'}}^j$.
Note that in this case, we furthermore have that
\begin{align}
\tr(\rho_{\rm sym} \proj{b^{(j)}}) &= \frac{1}{L} \tr(\rho P_{\vec{b}})\ ,
\end{align}
meaning that the inequality~\eqref{eq:minEntropyBound} is tight in case of such a symmetry which is our claim.
\end{proof}

The question of course remains, whether such unitaries do exist in general.
Wootters and Sussman~\cite{wootters:mub} 
have shown that there exists a unitary $U$ that cyclically permutes the set of all $d+1$ MUBs for $d=2^n$ by constructing
a unitary that corresponds to a rotation around the origin in phase space. Clearly, by considering the unitary $U^k$ one can trivially adapt
their construction to obtain a unitary that cyclically permutes $L$ MUBs whenever $L \cdot k = d+1$. By first translating any point in the phase space to the origin, then applying the transformation $U^k$ and finally translating the origin back to the original
point, one can obtain the desired unitaries $U_{\vec{b}}$ that enable us to find tight bounds for the min-entropic uncertainty relations. 
This is the first time we gain significant insight into the structure of the states that minimize~\eqref{eq:minEntropyBound}.

Note that our construction only gives unitaries $U_{\vec{b}}$ for $\vec{b} = (c,\ldots,c)$ for any $c \in \{0,\ldots,d-1\}$. This means
that our complementarity transform $U$, leads to tight bounds only if the largest eigenvalue of any $P_{\vec{b}}$ happens to occur for a $\vec{b}$ 
of this form. This is for example the case for $L=4$ in $d=4$, where we do not obtain a unitary from the phase space approach 
of~\cite{wootters:mub}.

\subsubsection{Discrete Wigner function}\label{sec:URdiscreteWigner}\label{sec:URdwgGeneral}
To see how finding a lower bound for min-entropic uncertainty relations
for $d+1$ MUBs relates to finding the extrema of the discrete Wigner function, let us first recall the properties of the discrete Wigner function. The discrete phase space is a two-dimensional vector space over a finite field $\mathbb{F}_d$, where here we focus on the case of
$d=2^n$. For every state $\rho$, we can associate a function $W_\alpha$ with every point $\alpha$ in the discrete phase space, 
known as the discrete Wigner function. For completeness, we provide a short summary on how to determine $W_\alpha$; a detailed
account can be found in~\cite{gibbons:striations}.
First of all, note that the $d^2$ points of the discrete phase space can be parititioned into 
$d$ parallel lines each of which contains $d$ points. Any such partition is called a 
\emph{striation}, and it is known that $d+1$ such striations can be found~\cite{gibbons:striations}.
One may now define the discrete Wigner function by relating each striation to one of the $d+1$ possible
mutually unbiased bases~\cite{gibbons:striations}: Let $\lambda_{b,j}$ denote the $b$-th line in the striation $j$. With each such line, 
we associate a projector
\begin{align}
Q(\lambda_{b,j}) = \proj{b^{(j)}}\ ,
\end{align}
onto the $b$-th element of the basis $\mB_j$, in a specific order so as to satisfy certain symmetry constraints~\cite{gibbons:striations}.
Defining the phase-space point operator 
\begin{align}
A_\alpha := \sum_{\substack{\lambda_{b,j}\\\alpha \subset \lambda_{b,j}}} Q(\lambda_{b,j}) - \id\ ,
\end{align}
one can now define the discrete Wigner function as
\begin{align}\label{eq:wignerFunction}
W_\alpha := \frac{1}{d} \tr(A_\alpha \rho)\ .
\end{align}
The \emph{extrema of the discrete Wigner function} are defined
as the minimum and maximum of~\eqref{eq:wignerFunction} over quantum states $\rho$. 

Note that when considering $L=d+1$ mutually unbiased bases, 
each point $\alpha$ in the discrete phase space can be contained in exactly one line from each basis, as all lines in a striation, i.e.,
one basis are parallel. Hence, there is a one-to-one correspondence between points $\alpha$ in discrete phase space and vectors
$\vec{b} \in \{0,\ldots,d-1\}^{\times d+1}$. In terms of the phase space operator this means that $A_\alpha + \id = P_{\vec{b}}$
Note that the maximum of the discrete Wigner function
\begin{align}
W_\alpha^{\rm max} = \max_\rho \frac{1}{d} \tr(A_\alpha \rho)\ ,
\end{align}
is simply the largest eigenvalue of $A_\alpha$ (or $P_{\vec{b}} - \id$) up to a factor of $1/d$. 
We thus have that 
\begin{align}
\zeta := d \cdot \left[\max_{\alpha} W_\alpha^{\rm max} + 1\right]\ ,
\end{align}
satisfies $P_{\vec{b}} \leq \zeta \id$ and the maximum of the discrete Wigner function provides 
a lower bound to the min-entropic uncertainty relations as given in~\eqref{eq:minEntropyWigner}.
The extrema $W_\alpha^{\rm max}$ were evaluated numerically in~\cite{casaccino:discrete} for small $d$. Note, however,
that as noted in Section~\ref{sec:symmetry}, one may use symmetries to solve the problem of determining
$W_\alpha^{\rm max}$ directly.

\subsection{A simple bound}\label{sec:URsimpleBound}

As mentioned in Section~\ref{sec:URmin-entropy}, the problem of finding a lower bound for the average min-entropy reduces to the problem of finding the maximum eigenvalue of the operator $P_{\vec{b}}$ defined in \eqref{eq:pVecs}. In the appendix, we use a result due to Schaffner~\cite{chris:diss} obtained using the techniques of Kittaneh~\cite{kittaneh3:normsum}, to show that for any set of $L$ mutually unbiased bases in dimension $d$, the maximum eigenvalue of $P_{\vec{b}}$ is bounded by
\begin{equation}
P_{\vec{b}} \leq \frac{1}{L}\left(1 + \frac{L-1}{\sqrt{d}}\right)\id, \mbox{ for all } \vec{b}.
\end{equation}
Using this, we obtain the following simple bound for the average min-entropy
in the appendix
\begin{lemma}\label{lem:smallLBound}
Let $\mB_0,\ldots,\mB_{L-1}$ be a set of mutually unbiased bases in dimension $d$.
Then,
\begin{equation}\label{eq:newbound}
\frac{1}{L}\sum_{j=0}^{L-1}\mathcal{H}_{\infty}(\mathcal{B}_{j}|\ket{\psi}) \geq -\log\left[\frac{1}{L}\left(1 + \frac{L-1}{\sqrt{d}}\right)\right]\ . 
\end{equation}
\end{lemma}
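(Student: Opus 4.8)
The plan is to reduce the entropic statement to a single operator-norm bound and then assemble the pieces with Jensen's inequality. First I would invoke \eqref{eq:minEntropyBound}, which is just convexity of $-\log$: writing $x_j := \max_{b^{(j)}}|\inp{b^{(j)}}{\psi}|^2$ so that $H_\infty(\mB_j|\ket{\psi}) = -\log x_j$, Jensen's inequality gives $\frac{1}{L}\sum_j(-\log x_j) \geq -\log\big(\frac{1}{L}\sum_j x_j\big)$. The key observation is that the average of the maxima collapses into a single maximum over a vector $\vec b$: since each index $b^{(j)}$ is chosen independently, $\sum_j \max_{b^{(j)}}\tr(\proj{\psi}\proj{b^{(j)}}) = \max_{\vec b}\tr(\proj{\psi}P_{\vec b})$ with $P_{\vec b}$ as in \eqref{eq:pVecs}. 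Hence it suffices to control $\frac{1}{L}\max_{\vec b}\lambda_{\max}(P_{\vec b})$, because $\tr(\proj{\psi}P_{\vec b}) \leq \lambda_{\max}(P_{\vec b})$ for every state.

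Second, the real content is the uniform eigenvalue bound $P_{\vec b} \leq \big(1 + \frac{L-1}{\sqrt d}\big)\id$ (the Schaffner--Kittaneh input used above the lemma). I would prove it directly through the Gram matrix. Writing $V$ for the $d\times L$ matrix whose columns are the unit vectors $\ket{b^{(j)}}$, we have $P_{\vec b} = VV^\dagger$, which shares its nonzero spectrum with the $L\times L$ Gram matrix $G = V^\dagger V$, $G_{jk} = \inp{b^{(j)}}{b^{(k)}}$. By the MUB condition $G$ has unit diagonal and off-diagonal entries of modulus exactly $1/\sqrt d$, so $G = \id + E$ with $E$ having zero diagonal and $|E_{jk}| = 1/\sqrt d$. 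Gershgorin's disc theorem then places every eigenvalue of the Hermitian matrix $G$ within distance $\sum_{k\neq j}|E_{jk}| = (L-1)/\sqrt d$ of $1$, giving $\lambda_{\max}(P_{\vec b}) = \lambda_{\max}(G) \leq 1 + \frac{L-1}{\sqrt d}$, uniformly in $\vec b$.

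Finally I would combine the two steps: $\frac{1}{L}\sum_j x_j = \frac{1}{L}\max_{\vec b}\tr(\proj{\psi} P_{\vec b}) \leq \frac{1}{L}\big(1 + \frac{L-1}{\sqrt d}\big)$, and monotonicity of $-\log$ converts this into \eqref{eq:newbound}. I expect the main obstacle to be the operator bound rather than the entropic bookkeeping: one must recognize the Gram-matrix reduction and check that the crude Gershgorin estimate already delivers the clean constant $1 + (L-1)/\sqrt d$ (Kittaneh's norm-of-sum inequalities are an alternative route to the same number). A minor point worth flagging is that the Jensen step is an equality only when all $x_j$ coincide, so \eqref{eq:newbound} need not be tight for every set of MUBs, consistent with the paper's later observation that tightness is a delicate, symmetry-dependent phenomenon.
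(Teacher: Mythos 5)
Your proof is correct, and the overall skeleton (Jensen's inequality, reduction to the largest eigenvalue of $P_{\vec{b}}$, then $-\log$ of the resulting constant) is exactly the paper's. Where you genuinely diverge is in the key operator bound: the paper imports Schaffner's inequality $\|\sum_j A_j\| \leq 1 + (L-1)\max_{j<k}\|A_jA_k\|$ for orthogonal projectors (proved via Kittaneh's norm-of-sum techniques) and evaluates $\|A_jA_k\| = |\inp{b^{(j)}}{b^{(k)}}| = 1/\sqrt{d}$, whereas you prove the same eigenvalue bound from scratch by passing to the $L\times L$ Gram matrix $G = V^\dagger V$ and applying Gershgorin's theorem to its unit diagonal and off-diagonal entries of modulus $1/\sqrt{d}$. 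Both routes land on the identical constant $1 + (L-1)/\sqrt{d}$. Your version is more self-contained and elementary --- it needs nothing beyond the equality of the nonzero spectra of $VV^\dagger$ and $V^\dagger V$ plus a disc estimate --- while the Kittaneh-based inequality is stated for arbitrary orthogonal projectors and so extends directly to the higher-rank setting of Schaffner's more general bound (the sets $X^i$ in \eqref{eq:chrisbound}), which the rank-one Gram-matrix reduction does not immediately cover. Your closing remark that the Jensen step is an equality only when all the $x_j$ coincide matches the paper's discussion of when the bound is tight.
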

For the case of $L=2$ MUBs in dimension $d$, our bound exactly matches the well known result of Deutsch (see ~\eqref{eq:deutsch}). For $L>2$, the only other known lower bound for the average min-entropy is the one obtained in ~\cite{chris:diss}, where it is shown that for a set of $L<\sqrt{d}$ MUBs in dimension $d=2^{n}$, the following holds: 
\begin{eqnarray}\label{eq:chrisbound}
\hspace{-1mm}&&\frac{1}{L}\sum_{j=0}^{L-1}\mathcal{H}_{\infty}(\mathcal{B}_{j}|\ket{\psi}) \nonumber \\
\hspace{-1mm}&\geq& \hspace{-1mm} -\log\left[\frac{1}{L}\left(1 + \frac{L-1}{\sqrt{d}}\max_{0\leq i<j\leq L-1}\sqrt{|X^{i}||X^{j}|}\right)\right]
\end{eqnarray}
where $X^{i}, X^{j} \subset \{0,1\}^{n}$ are subsets of $n$-bit strings. In the case of min-entropic uncertainty relations, these subsets contain only a single string, which corresponds to the peak of the probability distribution induced on the state $\ket{\psi}$ by the corresponding bases $\mathcal{B}^{i}$ and $\mathcal{B}^{j}$, so that,
\begin{align}
\max_{0\leq i<j\leq L-1}\sqrt{|X^{i}||Y^{j}|} = 1,
\end{align}
Thus, in dimension $d=2^{n}$, our bound in \eqref{eq:newbound} is clearly the same as \eqref{eq:chrisbound}. The reason we obtain a more general bound for $L \leq d+1$ MUBs in any dimension $d$ is that we reduce the problem directly to an eigenvalue problem without going through the representation in terms of bit strings as in~\cite{chris:diss}, via a much simpler argument using only the concavity of the $\log$.

Using an alternate approach involving a Bloch sphere like representation of the basis vectors $\ket{b^{(j)}}$, we show that the maximum eigenvalue of $P_{\vec{b}}$ can be bound differently, as follows:
\begin{equation}
P_{\vec{b}} \leq \frac{1}{d}\left(1 + \frac{d-1}{\sqrt{L}}\right)\id, \mbox{ for all } \vec{b}.
\end{equation}
As we show in the appendix, this implies
\begin{lemma}\label{lem:largeLBound}
Let $\mB_0,\ldots,\mB_{L-1}$ be a set of mutually unbiased bases in dimension $d$.
Then,
\begin{equation}\label{eq:(d+1)bound}
\frac{1}{L}\sum_{j=0}^{L-1}\mathcal{H}_{\infty}(\mathcal{B}_{j}|\ket{\psi}) \geq -\log\left[\frac{1}{d}\left(1 + \frac{d-1}{\sqrt{L}}\right)\right] \ .
\end{equation}
\end{lemma}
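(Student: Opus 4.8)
The plan is to follow exactly the reduction already set up in the excerpt: by the concavity bound \eqref{eq:minEntropyBound} it suffices to upper bound, uniformly over $\vec b$, the largest eigenvalue of $P_{\vec b}=\sum_j\proj{b^{(j)}}$, because $\lambda_{\max}(P_{\vec b})\le\zeta$ for all $\vec b$ yields $\frac1L\sum_j H_\infty(\mB_j|\ket{\psi})\ge-\log(\zeta/L)$. So the entire content of the lemma is this operator inequality, and in contrast to the Kittaneh-based route used for Lemma~\ref{lem:smallLBound} I would attack it with a generalized Bloch-vector representation. Fix any Hilbert--Schmidt orthonormal basis $\{\lambda_a\}_{a=1}^{d^2-1}$ of traceless Hermitian $d\times d$ matrices (the traceless part of a basis like $\setS$), so $\tr(\lambda_a\lambda_c)=\delta_{ac}$. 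Then every projector can be written $\proj{\phi}=\frac1d\id+\vec r_\phi\cdot\vec\lambda$ with a real vector $\vec r_\phi\in\Real^{d^2-1}$, and purity forces $\|\vec r_\phi\|^2=\tr\bigl((\proj{\phi})^2\bigr)-\frac1d=\frac{d-1}{d}$ for pure states, while a general state obeys only $\|\vec r_\phi\|^2\le\frac{d-1}{d}$.

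The first key step is to read mutual unbiasedness off in this language. Since $\tr(\proj{b^{(j)}}\proj{b^{(k)}})=|\inp{b^{(j)}}{b^{(k)}}|^2=\frac1d$ for $j\neq k$, and this same trace equals $\frac1d+\vec r^{(j)}\cdot\vec r^{(k)}$ in the Bloch picture, the Bloch vectors $\vec r^{(0)},\ldots,\vec r^{(L-1)}$ of the chosen peak elements are pairwise orthogonal, each of squared norm $\frac{d-1}{d}$. Crucially this uses nothing beyond pairwise unbiasedness, so no restriction to $d=2^n$ is needed and the argument applies to the general $d$ of the statement.

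Next I would compute the Bloch vector of $P_{\vec b}$ itself: summing gives $P_{\vec b}=\frac Ld\id+\vec R\cdot\vec\lambda$ with $\vec R=\sum_j\vec r^{(j)}$, and by orthogonality $\|\vec R\|^2=\sum_j\|\vec r^{(j)}\|^2=\frac{L(d-1)}{d}$, independently of which representatives $\vec b$ are chosen, which is precisely the uniformity the lemma requires. For any state $\rho=\frac1d\id+\vec r_\rho\cdot\vec\lambda$ one then has $\tr(\rho P_{\vec b})=\frac Ld+\vec r_\rho\cdot\vec R$, and Cauchy--Schwarz with $\|\vec r_\rho\|\le\sqrt{(d-1)/d}$ bounds the cross term by $\sqrt{\tfrac{d-1}{d}}\cdot\sqrt{\tfrac{L(d-1)}{d}}=\frac{(d-1)\sqrt L}{d}$. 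Hence $\lambda_{\max}(P_{\vec b})=\max_\rho\tr(\rho P_{\vec b})\le\frac1d\bigl(L+(d-1)\sqrt L\bigr)$ for every $\vec b$. Feeding $\zeta=\frac1d\bigl(L+(d-1)\sqrt L\bigr)$ into \eqref{eq:minEntropyBound} gives $-\log(\zeta/L)=-\log\bigl[\frac1d(1+\frac{d-1}{\sqrt L})\bigr]$, which is the claimed bound.

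The calculation is short, so the main obstacle is conceptual rather than computational: the estimate is only as good as the Cauchy--Schwarz step, whose equality case $\vec r_\rho\parallel\vec R$ at maximal norm generally does not correspond to a genuine pure state when $d>2$, since the pure-state Bloch vectors fill only a proper subset of the sphere of radius $\sqrt{(d-1)/d}$. Thus the inequality is always valid but need not be tight, consistent with the paper's separate tightness analysis; for the lemma only the upper bound is needed, so this gap is harmless. The one place requiring care is the normalization of $\{\lambda_a\}$ and the observation that $\|\vec R\|$ depends only on $L$ and $d$, which is what delivers a single $\zeta$ valid for all $\vec b$ simultaneously.
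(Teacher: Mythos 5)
Your proposal is correct and follows essentially the same route as the paper's own proof: a generalized Bloch-vector decomposition, the observation that mutual unbiasedness makes the Bloch vectors of the peak elements pairwise orthogonal with fixed norm, and a Cauchy--Schwarz (equivalently, parallel-vector) maximization yielding $\zeta=\frac1d\bigl(1+\frac{d-1}{\sqrt L}\bigr)$. The only differences are cosmetic normalization conventions for the operator basis and for $P_{\vec b}$, and your closing caveat about the maximizer not being a valid state mirrors the paper's own remark.
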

Notice that this alternate bound on the min-entropy is stronger than \eqref{eq:newbound} when $L>d$. In particular, for the complete set of $d+1$ MUBs in dimension $d$, this alternate bound in \eqref{eq:(d+1)bound} is stronger than any previously known bounds. When $L=d$ the two bounds that we derive are indeed equivalent. See Figs.~\ref{fig:dim4} and~\ref{fig:dim8} for a comparison of our bounds in dimensions $d=4$ and $d=8$ respectively.

\begin{figure}[!htp]
\begin{center}
	\includegraphics[width=0.52\textwidth]{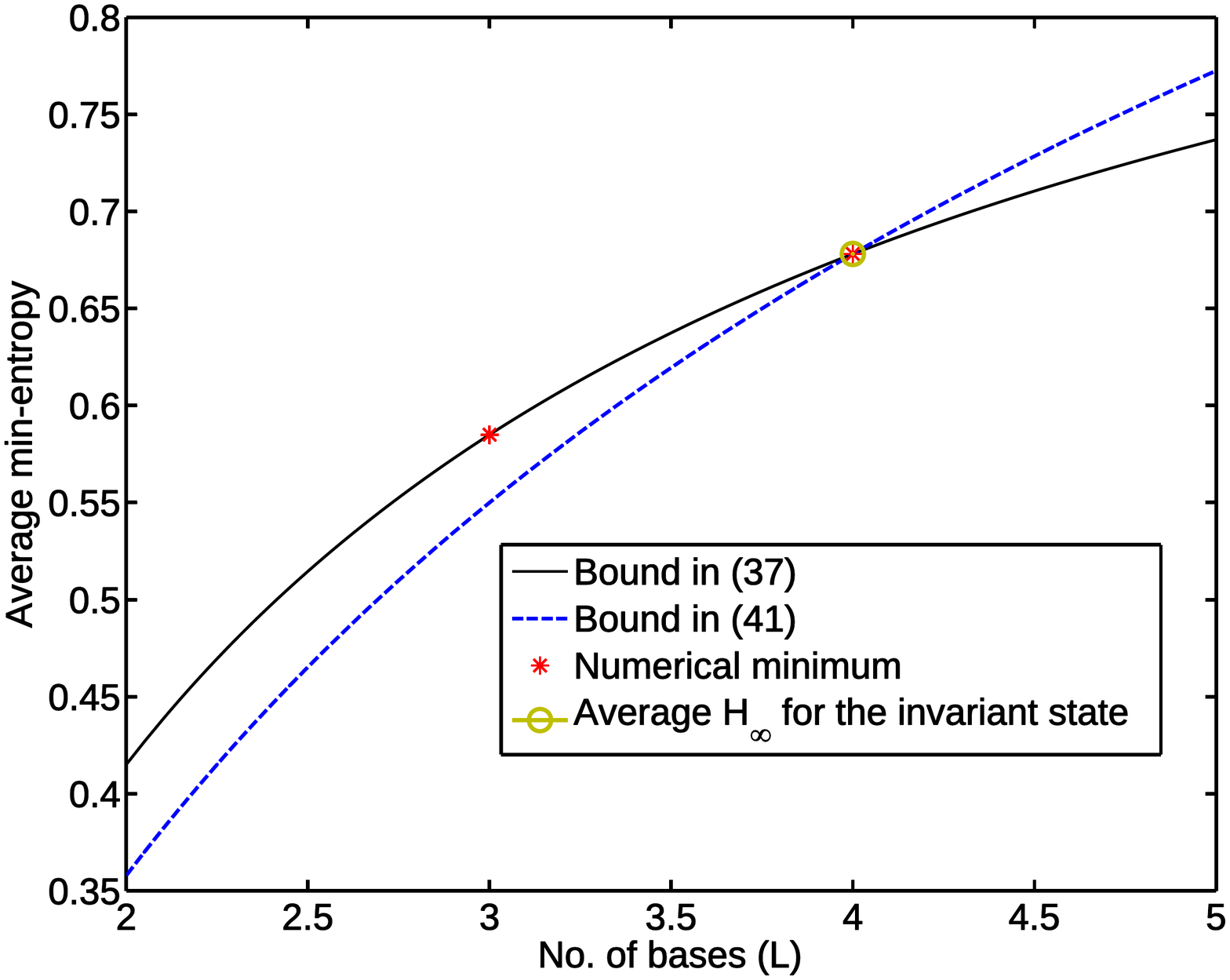}
	\caption{\label{fig:dim4} Average min-entropy for different sets of MUBs in dimension $d=4$. \\
The crosses denote numerically computed minima of the average min-entropy for MUBs obtained using our construction. 
The bound in \eqref{eq:newbound} is clearly tight for both $L=3$ and $L=4$ MUBs. 
The second analytical bound in \eqref{eq:(d+1)bound} is stronger than \eqref{eq:newbound} for $L = d+1 = 5$ bases. The circle denotes the average min-entropy for the invariant states given in \eqref{eq:invstate}. For $4$ MUBs in $d=4$ the minimum of the average min-entropy is indeed attained by states invariant under $U$.}
\end{center}
\end{figure}

Finally, we provide an example of a set of MUBs where Lemma~\ref{lem:smallLBound} is tight. For the set of $L=4$ MUBs in dimension $d=4$ constructed from the classes given in \eqref{eq:classd4}, our bound
\begin{align}
&\frac{1}{4}\sum_{j=0}^{3}\mathcal{H}_{\infty}(\mathcal{B}_{j}|\ket{\psi}) \geq -\log\left[\frac{1}{4}\left(1 + \frac{3}{2}\right)\right]
\approx 0.678\ ,
\end{align}
is tight, and the minimum is indeed achieved by a state that is invariant under the unitary transform that cycles through the bases, as defined in \eqref{eq:Udefine}. As noted in Section~\ref{sec:symmetry}, in this case, the largest eigenvalue of $P_{\vec{b}}$ occurs for a $\vec{b}$ of the form $\vec{b} = (c,\ldots,c)$ for any $c \in \{0,\ldots,3\}$. The states that achieve the lower bound are in fact eigenvectors of $U$, which can be expressed in terms of the MUB basis vectors as follows,
\begin{equation}\label{eq:invstate}
\ket{\psi_{b}} = \frac{1}{2} \sum_{j=0}^{3}\exp(i\pi j/4)\ket{b^{(j)}}, \; b \in \{0,\ldots,3\}.
\end{equation}

\begin{figure}[!ht]
  \begin{center}
	\includegraphics[width=0.52\textwidth]{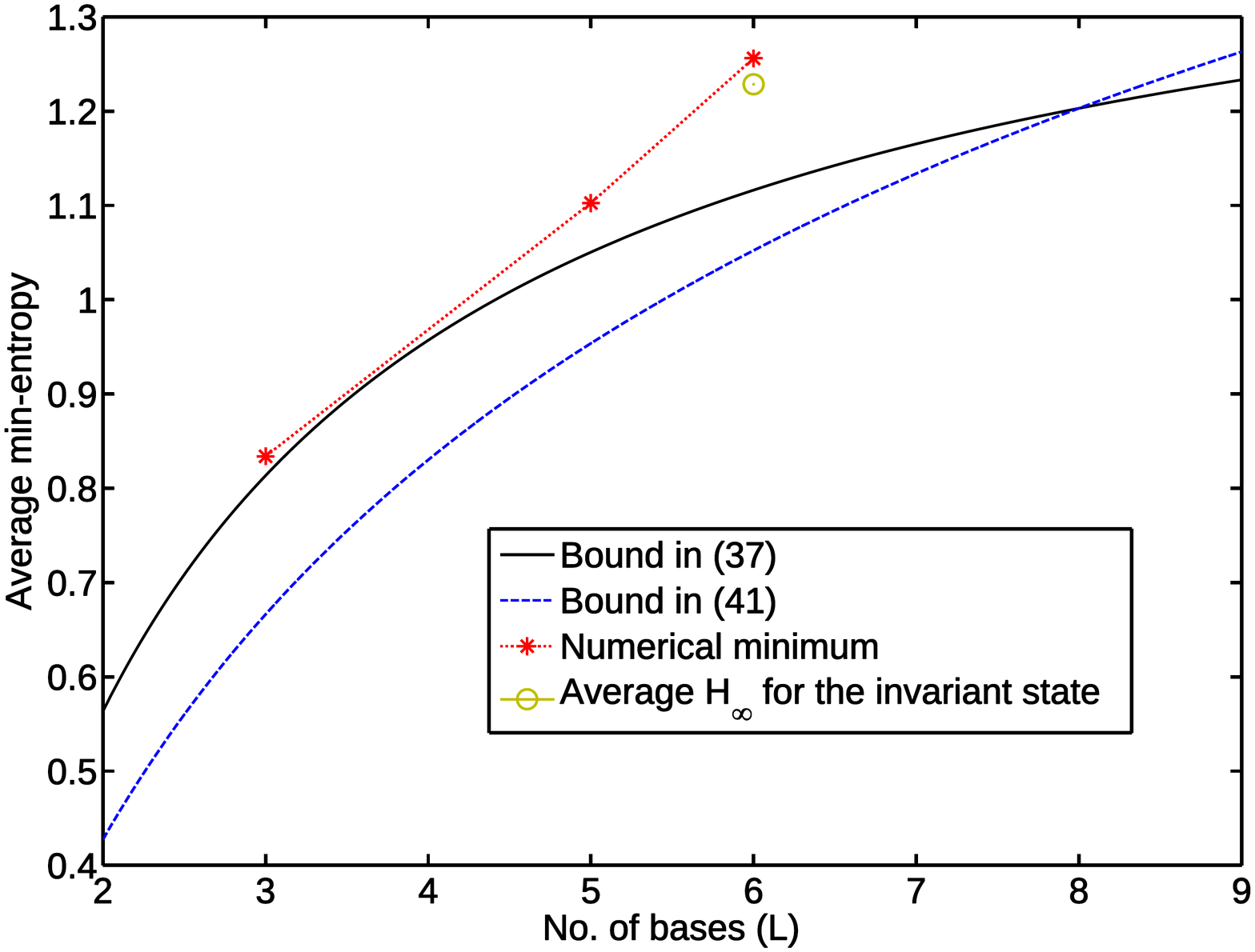}
	\caption{\label{fig:dim8} Average min-entropy for different sets of MUBs in dimension $d=8$.\\ 	
The bound in \eqref{eq:newbound} is close to tight for $L=3$ MUBs in dimension $d=8$. 
The second analytical bound in \eqref{eq:(d+1)bound} is stronger than \eqref{eq:newbound} for $L = d+1 = 9$ bases. The circle denotes the average min-entropy for invariant states constructed in dimension $d=8$, similar to the states described in \eqref{eq:invstate}. For $6$ MUBS in $d=8$, the minimum of the average min-entropy is nearly attained by states invariant under $U$.}
  \end{center}  
\end{figure}

\section{Conclusions and open questions}

We have shown that there exist up to $2 \leq L \leq 2n+1$ mutually unbiased bases in dimension $d=2^n$ for which we can find a unitary that cyclically permutes these bases, whenever $L$ is 
prime and $L$ divides $n$ or $L=2n+1$. This unitary is found by exploiting symmetry
properties of the Clifford algebra. Our approach is quite distinct from the phase space approaches that were previously used to show
that there exists such a unitary for the set of all $d+1$ MUBs~\cite{wootters:mub}, or for two halves of the full sets of MUBs when
$d= 1 \mbox{ or } 3 \mod 4$~\cite{appleby:cycle}. Our unitary can be understood as a generalization of the Fourier transform, and it would be interesting
to see whether it has other applications in quantum information. 

It is an interesting open question to generalize our result to other dimensions, or to a different number of bases. In prime dimension,
one could consider the generalized Clifford algebra~\cite{algebra:book}. 
Even though it does not have the full ${\rm SO}(2n+1)$ symmetry, it nevertheless exhibits enough symmetries to allow an exchange of generators.
This stems from the way the (generalized) Clifford algebra is obtained~\cite{algebra:book,genClifford}, which 
permits any transformation that preserves the $p$-norm for $p \geq 2$ in dimension $p$.
Yet, this is only the first step of our construction. As for generalizing our result to any $L$ bases in dimension $d=2^n$, we note
that it \emph{is} indeed possible to find such classes even when $L$ is not prime, as our example for $L=4$ in dimension $d=4$ shows. However, we also know that for $L=8$ classes in dimension $d=16$, no partitioning of operators can be found satisfying our requirements. It is an interesting open question as to when such a partitioning can be found in general.

Finally, we use our complementarity transform to obtain a tight uncertainty relation for the min-entropy for $L=4$ bases in dimension $d=4$.
No tight relations are known for this case before. We also use a slight generalization of the unitary from~\cite{wootters:mub} to show that when $d=2^n$ and $L$ divides $d+1$, the
minimizing state is an invariant of a certain unitary. This is the first time that significant insight has been obtained on the structure of the minimizing states for min-entropic uncertainty relations for mutually unbiased bases. It is an exciting open question to obtain tight relations in general, and understand the structure of the minimizing states.

\acknowledgments

We are grateful to David Gross for pointing us to the relevant literature for the discrete phase space construction. 
We also thank Lukasz Fidkowski and John Preskill for interesting discussions. PM and SW are supported by NSF grant PHY-0803371.

\appendix

\bigskip

\section{Constructing the unitary $U$}

It is well known that for any orthonormal transformation $T \in \text{O}(2n)$ there exists a corresponding unitary transformation
$U(T)$~\cite{lounesto:book}, where we refer to~\cite[Appendix C]{steph:diss} for instructions on how to 
obtain explicit constructions.
The transformation we wish to construct here, of the form $U(T)\Gamma_j U(T)^\dagger \rightarrow \Gamma_k$, is thereby particularly simple to obtain. It can be build up from successive rotations
in the plane spanned by only two ``vectors'' $\Gamma_j$ and $\Gamma_k$.
More specifically, we first construct a unitary that corresponds to a rotation around an angle $\pi/2$ in the plane spanned
by $\Gamma_j$ and $\Gamma_k$, bringing $\Gamma_j$ to $\Gamma_k$. This is simply a reflection around the plane 
orthogonal to the midvector between $\Gamma_j$ and $\Gamma_k$, followed by a reflection around the plane orthogonal
to $\Gamma_k$. Using the geometric properties of the Clifford algebra this corresponds to the unitary
\begin{align}
R_{j \rightarrow k} = \Gamma_k (\Gamma_j + \Gamma_k)/\sqrt{2}\ .
\end{align}
To obtain the desired unitary, we now compose a number of such rotations.
Let $\hat{R}_{j,k} = R_{j \rightarrow k}$ if $k$ is odd, and $\hat{R}_{j,k} = R_{k \rightarrow j}$ if $k$ is even.
Furthermore, let $F = \id$ if $L$ is odd, and $F = \Gamma_{2n} \Gamma_{L-1}$ if $L$ is even. Note that $\Gamma_{2n} \Gamma_{L-1}$
is the unitary that flips the sign of $\Gamma_{L-1}$, but leaves all $\Gamma_j$ for $j \neq 2n$ and $j \neq (L-1)$ invariant.
We may then write
\begin{align}
U(T) = F\hat{R}_{0,1}\hat{R}_{0,2}\ldots\hat{R}_{0,L-1}\ .
\end{align}

This unitary hence transforms $\Gamma_0 \rightarrow \Gamma_1 \rightarrow \ldots \rightarrow \Gamma_{L -1}
\rightarrow \Gamma_0$, but leaves all other generators $\Gamma_j$ for $j \geq L$ invariant. A similar unitary can be found for any transformation 
$T \in \text{SO}(2n+1)$~\cite{ww:cliffordUR}, but is more difficult to construct explicitly.

\section{Constructing maximally commuting classes of Clifford generators}

In ~\eqref{eq:classd3} and ~\eqref{eq:classd4} we gave examples of constructing $L=3$ and $L=4$ MUBs in dimension $d=4$, such that they are cyclically permuted under the action of a unitary $U$ that permutes the Clifford generators in $d=4$. Here, we show by a general construction that it is always possible to construct $L$ such classes in dimension $d=2^{n}$, whenever $L|n$ and $L$ is prime. We also outline a construction for $L=2n+1$ classes, given a unitary $U$ that cycles through \textit{all} $2n+1$ Clifford generators, when $2n+1$ is prime.

Given the $2n$ generators of the Clifford algebra in dimension $d=2^{n}$, we consider the set
\begin{align}
\setS = \{\id, \Gamma_{j}, i\Gamma_{j}\Gamma_{k}, \Gamma_{j}\Gamma_{k}\Gamma_{l},..., i\Gamma_{0}\Gamma_{1}..\Gamma_{2n-1}\equiv\Gamma_{2n}\}.
\end{align}
To generate a set of $L\leq 2n+1$ MUBs, we seek to group the elements of $\setS$ into $L$ classes of commuting operators, ie. sets  $\{\mathcal{C}_{0},\mathcal{C}_{1},\ldots,\mathcal{C}_{L-1} \mid \mathcal{C}_j \subset \setS\setminus\{\id\}\}$ of size $|\mathcal{C}_j| = d-1$
such that
\begin{enumerate}
\item[(\textbf{P1})] The elements of $\mathcal{C}_{j}$ commute for all $0 \leq j \leq L-1$,
\item[(\textbf{P2})] The classs are all \textit{mutually disjoint}, that is, 
\begin{align}
\mathcal{C}_{j}\cap\mathcal{C}_{k} =  \emptyset \; \mbox{ for all }\; j\neq k,
\end{align}
\item[(\textbf{P3})] The unitary $U$ that cyclically permutes the generators $\Gamma_0,\ldots,\Gamma_{L-1}$, also permutes the corresponding classes by permuting products of operators appropriately. 
\end{enumerate}
Our approach in obtaining such a set of classes is to first pick $d-1$ elements for the class $\mC_0$ and then generate the rest of the classes by repeated application of $U$ to the elements of $\mC_0$. This automatically ensures property~(\textbf{P3}). To ensure (\textbf{P1}) and (\textbf{P2}), the $d-1$ operators $\mC_0 \equiv \{\mO_1,\mO_2,\ldots\mO_{d-1}\}$ must satisfy the following:
\begin{enumerate}
\item[(i)] For any pair $\mO_i,\mO_j \in \mC_0$, $[\mO_i,\mO_j] = 0$, and
\item[(ii)] The operators in $\mC_0$ cycle through \textit{mutually disjoint} sets of operators under the action of $U$.
\end{enumerate} 
To understand condition (ii) better, consider an operator $\mO_i$ in $\mC_0$. Then, by construction, $U^{k}(\mO_i) \in \mC_{k}$ for $0 \leq k \leq L-1$, assuming we construct a total of $L$ classes. In addition, property (ii) implies $U^{k}(\mO_i) \notin \mC_j$, for any $j \neq k$. In other words, given any two operators $\mO_i,\mO_j \in \mC_0$ that cycle through the sets 
\begin{eqnarray}
S_i &=& \{U^{k}(\mO_i) | 0 \leq k \leq L-1 \} \; \textrm{and} \\ 
S_j &=& \{U^{k}(\mO_j) | 0 \leq k \leq L-1\}\ , 
\end{eqnarray}
respectively under the action of $U$, property(ii) demands that $S_i\cap S_j = \emptyset$, for all $i\neq j$ and $i,j = 1,2,\ldots,d-1$.

Finally, we note that no class can contain two generators $\Gamma_j$ and $\Gamma_k$, since they do not commute. When forming the classes we hence ensure that each one contains exactly one generator $\Gamma_j$, which we refer to as the \textit{singleton} $\Gamma$-operator of the class, as opposed to the rest of the elements which will be \textit{products} of $\Gamma$-operators. The fact that each class can contain at most one singleton operator limits us to constructing a maximum of $2n+1$ such classes. 

\subsection{Mathematical tools}

Before proceeding to outline our construction, we establish some useful mathematical facts which will help motivate our algorithm for the construction of mutually disjoint classes. For the rest of the section, we will work with a set of $p$ $\Gamma$-operators $\{\Gamma_0, \Gamma_2,\ldots,\Gamma_{p-1}\}$ that are cycled under the action of $U$, as follows, 
\begin{equation}
U: \Gamma_{0}\rightarrow\Gamma_{1}\rightarrow\ldots\Gamma_{p-1}\rightarrow\Gamma_{0}, 
\end{equation}
In other words, we are given a set of $\Gamma$-operators whose \textit{cycle-length} is $p$.

\subsection{Length-$2$ operators}

First, we consider sets of products of two $\Gamma$-operators of the form $\Gamma_{i}\Gamma_{j}$, which we call \textit{length-2} operators. It is convenient to characterize such pairs in terms of the \textit{spacing} -- ($S$) -- between the operators that constitute them. The spacing function $S$, for a given set of $p$ operators, is simply defined as: $S(\Gamma_{i}\Gamma_{j}) = (j-i)\,\textrm{mod} \, p$. Then, the following holds:
 
\begin{lemma}[\textbf{Unique spacings imply non-intersecting cycles}]\label{lem:spacing}
The action of $U$ on any length-2 operator $\Gamma_{i}\Gamma_{j}$ leaves its spacing function $S(.)$ invariant. Thus, length-2 operators that have unique spacings cycle through mutually disjoint sets of operators under the action of $U$.
\end{lemma}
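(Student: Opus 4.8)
The plan is to reduce everything to the single fact, already built into the setup, that the conjugation action $U(\cdot) = U(\cdot)U^\dagger$ is an algebra automorphism which cyclically shifts the generator indices, $U(\Gamma_m) = \Gamma_{(m+1)\bmod p}$ for $0\le m\le p-1$. Since conjugation is multiplicative, it carries a length-$2$ operator to another length-$2$ operator by incrementing both of its indices:
\begin{align}
U(\Gamma_i\Gamma_j) = U(\Gamma_i)\,U(\Gamma_j) = \Gamma_{(i+1)\bmod p}\,\Gamma_{(j+1)\bmod p}\ .
\end{align}
First I would verify the invariance of the spacing directly. Because both indices are incremented by the same amount modulo $p$, the directed difference is unchanged,
\begin{align}
S\big(U(\Gamma_i\Gamma_j)\big) = \big((j+1)-(i+1)\big)\bmod p = (j-i)\bmod p = S(\Gamma_i\Gamma_j)\ .
\end{align}
This computation already absorbs the wrap-around case $j=p-1$: although $\Gamma_{(i+1)\bmod p}\Gamma_0$ may have to be rewritten in the canonical index order at the cost of a sign, since $\Gamma_a\Gamma_b=-\Gamma_b\Gamma_a$, the mod-$p$ difference of the indices is exactly what $S$ records, so it is insensitive to this reordering. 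Iterating then gives $U^k(\Gamma_i\Gamma_j)=\Gamma_{(i+k)\bmod p}\,\Gamma_{(j+k)\bmod p}$, so every operator in the orbit $S_{\Gamma_i\Gamma_j}$ carries the same spacing.

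For the disjointness claim I would argue through the invariant just established. Since $S$ is constant on each $U$-orbit, it descends to a well-defined label on orbits. If two length-$2$ operators $\mO,\mO'$ have distinct spacings but their orbits shared a common element, that element would simultaneously carry spacing $S(\mO)$ and $S(\mO')$, forcing $S(\mO)=S(\mO')$, a contradiction. Hence orbits carrying distinct spacings are disjoint, which is precisely the statement that length-$2$ operators with unique spacings cycle through mutually disjoint sets under $U$.

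The one point requiring genuine care, and the main obstacle, is well-definedness: the spacing is a priori a function of the \emph{ordered} product $\Gamma_i\Gamma_j$, whereas the elements of $\setS$ are the Hermitian representatives $i\Gamma_i\Gamma_j$, and reversing the order sends $S\mapsto p-S$. I would therefore fix once and for all an orientation convention for writing each length-$2$ operator, so that $S$ is a genuine function on $\setS$, and check that the index shift $m\mapsto(m+1)\bmod p$ is compatible with it; the truly canonical orbit invariant is then the undirected pair $\{S,\,p-S\}$. Finally I would note that the possible sign factor $F$ appearing in $U$ when $p$ is even alters only overall signs and never the index structure, so it affects neither the spacing nor the disjointness conclusion.
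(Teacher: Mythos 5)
Your proof is correct and follows essentially the same route as the paper: conjugation by $U$ shifts both indices by $1 \bmod p$, so the difference $(j-i)\bmod p$ is invariant, and disjointness of the orbits follows because the spacing is an orbit invariant. Your additional care about the ordering convention and sign factors addresses well-definedness issues the paper's one-line proof leaves implicit, but the core argument is identical.
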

\begin{proof}
Recall, $U: \Gamma_{i} \rightarrow \Gamma_{(i+1)\textrm{mod}p}$. It clearly follows that 
\begin{eqnarray}
U: S(\Gamma_{i}\Gamma_{j}) &\rightarrow& S\left(\Gamma_{(i+1)\textrm{mod}p}\Gamma_{(j+1)\textrm{mod}p}\right) \nonumber \\
&=& (j-i)\,\textrm{mod}\,p  \nonumber \\
&=& S(\Gamma_{i}\Gamma_{j}).
\end{eqnarray}
\end{proof}

\subsection{Higher length operators}

Similar to defining length-2 operators, we refer to any product of $\ell$ $\Gamma$-operators as a \textit{length-$\ell$} operator. For operators of length higher than $2$, it becomes convenient to refer to them using their corresponding index sets. For example, the operator $\Gamma_{i_1}\Gamma_{i_2}\ldots\Gamma_{i_\ell}$ will be simply denoted by the index set $(i_1,i_2,\ldots i_\ell)$. In the following Lemma, we obtain a condition for any set of length-$\ell$ operators to cycle through mutually disjoint sets under the action of $U$.

\begin{lemma}[\textbf{Mutually disjoint cycles for length $\ell$}]\label{lem:lengthl}
Suppose the length-$\ell$ operators (for $3\leq\ell\leq p-1$) that belong to the class $\mC_{0}$ 
are such that they correspond to index sets $(i_1,i_2,\ldots,i_{\ell})$ which all sum to the same value
\begin{equation}\label{eq:sumC0}
i_{1} + i_{2} + \ldots + i_{\ell} = c_{\ell}\,\textrm{mod}\,p, \; \forall \; (i_1,i_2,\ldots,i_\ell) \in \mC_0
\end{equation}
Then, no given index set of length $\ell$
can belong to more than one class, for prime values of $p$.
\end{lemma}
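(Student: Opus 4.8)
The plan is to exploit a single invariant of the $U$-action on index sets: the sum of indices modulo $p$. Since $U:\Gamma_i \to \Gamma_{(i+1)\bmod p}$, applying $U$ to a length-$\ell$ operator $\Gamma_{i_1}\cdots\Gamma_{i_\ell}$ replaces each index $i_m$ by $(i_m+1)\bmod p$, and hence shifts the index sum by exactly $\ell$ modulo $p$,
\[
\sum_{m=1}^\ell i_m \;\longmapsto\; \sum_{m=1}^\ell (i_m+1) \equiv \Big(\sum_{m=1}^\ell i_m\Big) + \ell \pmod p.
\]
This is the length-$\ell$ analogue of the spacing invariant of Lemma~\ref{lem:spacing}, and it is the only structural fact the argument needs.

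First I would record how the common-sum hypothesis propagates through the classes. By construction $\mC_k = U^k(\mC_0)$, so every length-$\ell$ operator in $\mC_k$ is obtained from one in $\mC_0$ by $k$ applications of $U$. Combined with the shift computed above and the hypothesis~\eqref{eq:sumC0} that every index set in $\mC_0$ sums to $c_\ell \pmod p$, this shows that every length-$\ell$ index set appearing in $\mC_k$ has index sum $c_\ell + \ell k \pmod p$. Since $U$ has order $p$ on these generators, $U^p$ fixes each length-$\ell$ product, so the distinct classes are naturally indexed by $k \in \{0,\dots,p-1\}$.

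Next I would derive the contradiction. Suppose some length-$\ell$ index set $I$ belonged to two classes $\mC_j$ and $\mC_k$ with $0 \le j,k \le p-1$ and $j \ne k$. Evaluating the index sum of $I$ in two ways gives $c_\ell + \ell j \equiv c_\ell + \ell k \pmod p$, i.e. $\ell(j-k) \equiv 0 \pmod p$. Here the primality of $p$ enters decisively: because $3 \le \ell \le p-1$ we have $\ell \not\equiv 0 \pmod p$, so $\ell$ is invertible modulo $p$, and therefore $j \equiv k \pmod p$. As $j,k$ both lie in $\{0,\dots,p-1\}$, this forces $j = k$, contradicting $j \ne k$. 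Hence no length-$\ell$ index set lies in more than one class, which is the claim.

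The main point to get right --- and the place where the hypothesis is genuinely used --- is the invertibility of $\ell$ modulo $p$. This is precisely what fails for composite $p$: if $p$ shares a nontrivial factor with $\ell$, then $\ell(j-k)\equiv 0 \pmod p$ admits solutions with $j \ne k$, so an index set can reappear in a later class. This matches the remark in the main text that suitable partitions fail for non-prime configurations such as $L=8$ in $d=16$, and explains why the common-sum device must be paired with primality. I expect no further obstacle, since once the sum invariant and its behaviour under $U$ are in hand the remainder is a one-line modular computation.
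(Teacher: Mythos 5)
Your proposal is correct and follows essentially the same route as the paper: both track the index-sum modulo $p$ as the invariant that shifts by $\ell$ under each application of $U$, deduce that the length-$\ell$ sets in $\mC_k$ sum to $c_\ell + k\ell \bmod p$, and derive $\ell(j-k)\equiv 0 \pmod p$ from a supposed double membership. The only cosmetic difference is that you close the argument via invertibility of $\ell$ modulo a prime, while the paper notes that a product of two nonzero residues each at most $p-1$ cannot vanish modulo a prime --- these are the same fact.
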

\begin{proof}
Given the operators $\{\Gamma_{i_{1}}\Gamma_{i_{2}}\ldots\Gamma_{i_{\ell}}\} \in \mC_{0}$, such that the corresponding index sets $(i_{1},i_{2},\ldots,i_{\ell})$ sum to 
\begin{align}
i_{1} + i_{2} + \ldots + i_{\ell} = c_{\ell}\,\textrm{mod}\,p, \; \forall (i_{1},i_{2},\ldots,i_{\ell}) \in \mC_0.
\end{align}
Under the action of $U$, these index sets change to
\begin{eqnarray}
(i_{1}, i_{2}, \ldots, i_{\ell}) &\rightarrow& (i^{(1)}_{1},i^{(1)}_{2},\ldots,i^{(1)}_{\ell}) \\
 &=& (i_{1}+1,i_{2}+1,\ldots,i_{\ell}+1)\,\textrm{mod}\,p\ . \nonumber
\end{eqnarray}
For any index set 
$(i^{(1)}_{1},i^{(1)}_{2},\ldots,i^{(1)}_{\ell})\in \mC_1$
the sum of the indices corresponding to the new operators $\{\Gamma_{i^{(1)}_{1}}\Gamma_{i^{(1)}_{2}}\ldots\Gamma_{i^{(1)}_{\ell}}\} \in \mC_{1}$ becomes
\begin{eqnarray}
i_{1}^{(1)} + i^{(1)}_{2} + \ldots + i^{(1)}_{\ell} &=& (c_{\ell} + \ell)\,\textrm{mod}\,p, 
\end{eqnarray}
Proceeding similarly, the corresponding operators in the class $\mC_{k}$ have index sets $(i_{1}^{(k)},i_{2}^{(k)},\ldots,i_{\ell}^{(k)})$ that sum to
\begin{eqnarray}\label{eq:suml}
i_{1}^{(k)} + i_{2}^{(k)} + \ldots + i_{\ell}^{(k)} &=& (c_{\ell} + k\ \ell)\,\textrm{mod}\,p,
\end{eqnarray}
for all $(i^{(k)}_{1},i^{(k)}_{2},\ldots,i^{(k)}_{\ell}) \in \mC_k$.
Thus, starting with a constraint on the length-$\ell$ operators in $\mC_0$, we have obtained a constraint on the corresponding operators in a generic class $\mC_k$.

Now, to arrive at a contradiction, suppose that an index set $(j_1,j_2,\ldots,j_\ell)$ whose indices $\{j_m\}_m$ take values from the set $\{0,1,\ldots,p-1\}$, belongs to two different classes, $\mC_k$ and $\mC_k'$ (with $k\neq k'$). The constraint imposed by~\eqref{eq:suml} implies
\begin{eqnarray}\label{eq:constraintl}
(c_{\ell} + k\ \ell)\,\textrm{mod}\,p &=& (c_{\ell} + k'\ \ell)\,\textrm{mod}\,p \nonumber \\
\Rightarrow (k-k')\ell\,\textrm{mod}\,p &=& 0 .
\end{eqnarray}
Without loss of generality, let $k> k'$. Since we can form at most $p$ classes, the difference $(k-k')$ can be at most $(p-1)$. 
Finally, since $\ell \leq p-1$, condition \eqref{eq:constraintl} cannot be satisfied for prime values of $p$. 
\end{proof}

Recall that our approach to constructing any $p$ classes is to first construct the class $\mC_0$, and then obtain the rest by successive application of $U$. Therefore, the fact that any index set of a certain length $\ell$ cannot belong to more than one class implies that each length-$\ell$ operator in $\mC_0$ cycles through a unique set of length-$\ell$ operators under $U$. In other words, the length-$\ell$ operators cycle through mutually disjoint sets, as desired.

Lemma~\ref{lem:lengthl} thus provides us with a sufficient condition for the set of length-$\ell$ operators in $\mC_0$ to cycle through mutually disjoint sets under $U$, given a set of $\Gamma$-operators whose cycle-length is prime-valued. We only need to ensure that the length-$\ell$ operators in the first class that we construct, $\mC_0$, correspond to index sets that \textit{all} sum to the same value. This condition is of course subject to the constraint that the maximum allowed length for the operators in $\mC_0$ (and by extension, in any class) is $p-1$. 

\subsection{Constructing $2n+1$ prime classes}

As a warmup, we construct $L=2n+1$ classes in dimension $d = 2^{n}$, when $2n+1$ is prime. 
This case is particularly easy, and illustrates how the results of the previous sections will be used in general.

\begin{theorem}[\textbf{$2n+1$ prime classes}]\label{thm:(2n+1)classes}
Let $\mG^{(\textrm{full})} = \{\Gamma_0,\ldots,\Gamma_{2n}\}$ denote the complete set of
$(2n+1)$ $\Gamma$-operators, and
let $U$ be the unitary that cycles through all of them, that is,
\begin{eqnarray}
U \; &:&  \Gamma_{0}\rightarrow\Gamma_{1}\ldots\Gamma_{2n-1}\rightarrow\Gamma_{2n}\rightarrow\Gamma_{0}\ .
\end{eqnarray}
If $2n+1$ is prime, then there exist $2n+1$ classes $\mC_0,\mC_1,\ldots\mC_{2n}$ satisfying properties \textbf{(P1)} through \textbf{(P3)}.
\end{theorem}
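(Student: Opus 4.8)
The plan is to construct the first class $\mC_0$ explicitly as a maximal commuting set containing $\Gamma_0$, and then define $\mC_k := U^k(\mC_0)$ for $1 \le k \le 2n$, so that property \textbf{(P3)} holds automatically by construction. Concretely, I would take $\mC_0 \cup \{\id\}$ to be the abelian group generated by the $n+1$ mutually commuting operators $\Gamma_0$ and the ``antipodal pair'' operators $i\Gamma_a \Gamma_{2n+1-a}$ for $a = 1, \ldots, n$. The corresponding index blocks $\{0\}, \{1, 2n\}, \{2, 2n-1\}, \ldots, \{n, n+1\}$ are pairwise disjoint and partition $\{0, 1, \ldots, 2n\}$, and crucially each block has index-sum $\equiv 0 \bmod (2n+1)$.

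First I would verify commutativity, i.e.\ property \textbf{(P1)}. Using the rule that $\Gamma_S$ and $\Gamma_T$ commute iff $|S|\,|T| + |S\cap T|$ is even, and noting that the generators have pairwise disjoint index sets (so $|S \cap T| = 0$) while each has length $1$ or $2$, one checks directly that all generators commute; hence the group they generate is abelian and every pair of elements of $\mC_0$ commutes. Since conjugation by $U$ preserves commutation relations, each $\mC_k$ is abelian as well. Next I would pin down the size. The $n+1$ generators satisfy exactly one relation: their product has index multiset $\{0, 1, \ldots, 2n\}$, and since $\Gamma_{2n} = i\Gamma_0\cdots\Gamma_{2n-1}$ gives $\Gamma_0\Gamma_1\cdots\Gamma_{2n} = -i\,\id$, this product is a scalar. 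No proper nonempty sub-collection of blocks multiplies to a scalar, because its union is a proper nonempty index subset and hence a genuine product of distinct generators. Thus exactly $n$ of the generators are independent, the group has order $2^n$, and $|\mC_0| = 2^n - 1 = d-1$ as required.

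Disjointness (property \textbf{(P2)}) is where the two auxiliary lemmas do the work. Every element of $\mC_0$ is a union of blocks, so every element has index-sum $\equiv 0 \bmod (2n+1)$. The singletons form the orbit $\Gamma_0, \Gamma_1, \ldots, \Gamma_{2n}$ and are trivially disjoint across classes. The length-$2$ elements of $\mC_0$ are precisely the $n$ pairs $i\Gamma_a\Gamma_{2n+1-a}$, whose spacings $-2a \bmod (2n+1)$ are pairwise distinct (because $2n+1$ is odd, so $2$ is invertible), whence Lemma~\ref{lem:spacing} yields disjoint cycles for them. All elements of length $\ell \ge 3$ share the common index-sum $0$, so Lemma~\ref{lem:lengthl}, which needs $2n+1$ prime, guarantees that no index set of a given length lies in two distinct classes; since $U$ preserves length, collisions between different lengths are impossible, and the classes $\mC_0, \ldots, \mC_{2n}$ are pairwise disjoint.

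I expect the main obstacle to be a bookkeeping subtlety rather than a conceptual one. Because $\Gamma_{2n} = i\Gamma_0\cdots\Gamma_{2n-1}$, each operator $\Gamma_S$ equals $\Gamma_{\bar S}$ up to phase for the complementary index set $\bar S$ with $|S| + |\bar S| = 2n+1$, so ``length'' and ``index set'' are only well defined after fixing a representative. One must check that the sum-$0$ and spacing hypotheses of the lemmas are invariant under this duality --- they are, since $\mathrm{sum}(\bar S) \equiv -\,\mathrm{sum}(S) \equiv 0$ --- and verify consistency in the smallest case $d=4$, where the dual of a length-$2$ operator is itself a length-$3 = 2n-1$ operator; this is precisely what collapses the construction onto the explicit class in~\eqref{eq:classd4}. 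Handling this self-duality cleanly, and confirming that the single pseudoscalar relation is the \emph{only} relation among the $n+1$ generators, are the two points that demand the most care.
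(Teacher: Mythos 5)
Your proposal is correct and its skeleton is the same as the paper's: seed a class $\mC_0$ whose index sets all sum to $0 \bmod (2n+1)$, set $\mC_k = U^k(\mC_0)$ so that \textbf{(P3)} is automatic, and then get \textbf{(P2)} from Lemma~\ref{lem:spacing} for length-$2$ elements and Lemma~\ref{lem:lengthl} (where primality of $2n+1$ enters) for higher lengths. What you do differently is the description of the seed class: the paper takes $\Gamma_0$ together with only the $n-1$ antipodal pairs $\Gamma_a\Gamma_{2n+1-a}$, $1\le a\le n-1$ (omitting $\Gamma_n\Gamma_{n+1}$), lists the higher-length elements explicitly by parity, and counts $|\mC_0|=2\sum_{i=0}^{n-1}\binom{n-1}{i}-1=d-1$; you instead take the abelian group generated by $\Gamma_0$ and \emph{all} $n$ antipodal pairs and quotient by the single pseudoscalar relation, replacing the binomial count by a group-order count. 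These yield literally the same set of operators, since the product of the paper's $n$ generators has index set $\{0,\dots,2n\}\setminus\{n,n+1\}$ and hence equals $i\Gamma_n\Gamma_{n+1}$ up to phase, so your extra generator is already in the paper's group; your presentation is more symmetric and makes the ``all index sums are $0$'' property immediate. The one point where your write-up is genuinely more careful than the paper's is the duality $\Gamma_S\sim\Gamma_{\bar S}$ induced by $\Gamma_{2n}=i\Gamma_0\cdots\Gamma_{2n-1}$: the paper tacitly treats index sets of different lengths as distinct operators, whereas a length-$\ell$ element of $\mC_j$ could a priori coincide with a length-$(2n+1-\ell)$ element of $\mC_k$. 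Your observation that $\mathrm{sum}(\bar S)\equiv-\mathrm{sum}(S)$ resolves this, but do record the final step explicitly: a cross-representative collision forces $-j\ell\equiv-k\ell\pmod{2n+1}$, and since $1\le\ell\le 2n$ and $2n+1$ is prime this gives $j=k$, i.e.\ the duality only identifies elements within a single class and never across classes.
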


\begin{proof}
We prove the existence of $2n+1$ classes by construction. We first outline an algorithm to pick $d-1$ operators that constitute the class $\mC_{0}$. The remaining classes are easily obtained by the application of $U$ to the elements of $\mC_{0}$. Then, we make use of Lemmas~\ref{lem:spacing} and~\ref{lem:lengthl} to prove that the classes obtained through our construction do satisfy the desired properties.\\

\bigskip

\noindent\textbf{\underline{Algorithm}}
\begin{enumerate}
\item[1.] Pick one of the elements of $\mG^{(\textrm{full})}$, $\Gamma_{0}$, as the singleton operator.
\item[2.] Pair up the remaining operators in $\mG^{(\textrm{full})}$ to form $(n-1)$ length-2 operators which commute with $\Gamma_{0}$, as follows,
\begin{eqnarray}
\mL_{2} &=& \{ \,\Gamma_{1}\Gamma_{2n}, \; \Gamma_{2}\Gamma_{2n-1}, \; \ldots, \\
\; \; \; \; && \ldots , \Gamma_{n-2}\Gamma_{n+3}, \; \Gamma_{n-1}\Gamma_{n+2} \, \}, \nonumber
\end{eqnarray}
where $\mL_2$ denotes the set of length-2 operators in $\mC_1$. Since we have left out the pair $\Gamma_n \Gamma_{n+1}$ in the middle, we get,
$|\mL_{2}| = n-1$.
\item[3.] Form higher length operators that commute with $\mL_{2}\cup\{\Gamma_{0}\}$, by combining $\Gamma_{0}$ with appropriate combinations of the length-2 operators. Any operator of even length $\ell = 2j$ is created by combining $i$ pairs in $\mL_{2}$. And any operator of odd length $\ell = 2j+1$ is created by appending $\Gamma_{0}$ to a length-$2j$ operator.  

Denoting the sets of length-3 operators as $\mL_{3}$, length-4 operators as $\mL_{4}$, and in general, the set of length-$i$ operators as $\mL_i$, we have,
\begin{eqnarray}
|\mL_{3}| &=& |\mL_{2}| = n-1 ,\nonumber \\
|\mL_{4}| &=& \left(\begin{array}{c}
 n-1\\
 2\end{array}\right), \; |\mL_{5}| = |\mL_{4}|, \nonumber \\
|\mL_{6}| &=& \left(\begin{array}{c}
 n-1\\
 3\end{array}\right), \; |\mL_{7}| = |\mL_{6}|, \nonumber \\
 \vdots && \vdots \nonumber \\
|\mL_{2n-2}| &=& \left(\begin{array}{c}
 n-1\\
 n-1\end{array}\right) = 1,\; |\mL_{2n-1}| = |\mL_{2n-2}|. \nonumber
\end{eqnarray}
\end{enumerate}

Putting together the operators from steps $(1)$, $(2)$, and $(3)$ we get the desired cardinality for the class $\mC_{0}$ as follows:
\begin{eqnarray}\label{eq:countC0}
|\mC_{0}| &=& 1 + (n-1) + \sum_{i=3}^{2n}|\mL_{i}| \nonumber \\
&=& 1 + 2(n-1) + 2\left(\begin{array}{c}
 n-1\\
 2\end{array}\right) + 2\left(\begin{array}{c}
 n-1\\
 3\end{array}\right) \nonumber \\
 && \; \; \; \; + ... 
 + 2\left(\begin{array}{c}
 n-1\\
 n-1\end{array}\right) \nonumber \\
 &=& 2\sum_{i=0}^{n-1}\left(\begin{array}{c}
 n-1\\
 i\end{array}\right) -1 = 2(2^{n-1}) -1 \nonumber \\
 &=& 2^{n}-1 = d-1
\end{eqnarray}

The rest of the classes are generated by successive applications of the unitary $U$ to the elements of $\mC_{0}$, so that $U: \mC_{i}\rightarrow\mC_{(i+1)\mod 2n+1}$. \\

\bigskip

It is easy to see that the elements of each class satisfy property (\textbf{P1}) above -- the different length operators have been picked in such a way as to ensure that they all commute with each other. Similarly, by construction, they satisfy property \textbf{(P3)}. It only remains to prove property (\textbf{P2}), that the classes are all mutually disjoint.\\

The elements of $\mL_2$ correspond to the following set of spacings
\begin{align}
S(\mL_{2}) \equiv \{2n-1,2n-3, \ldots,5, 3\} 
\end{align}
which are all distinct. So by Lemma~\ref{lem:spacing}, the elements of $\mL_2$ cycle through mutually disjoint sets of length-$2$ operators. 

For higher length operators, we first show that our construction meets the conditions of Lemma~\ref{lem:lengthl}. For the class $\mC_0$, the elements of $\mL_2$ correspond to index sets that satisfy 
\begin{align}
\mL_{2}(\mC_{0}) = \{(i_{1},i_2)| \, i_1 + i_2 = 0\,\textrm{mod}\,(2n+1)\}.
\end{align}
The length-2 operators of a generic class $\mC_{k}$ similarly satisfy
\begin{equation}\label{eq:L2}
\mL_2(\mC_k) = \{(i_{1},i_{2})| \, i_1 + i_2 = 2k\,\textrm{mod}\,(2n+1)\}.
\end{equation}

Since higher length operators are essentially combinations of length-2 operators and the singleton operator, conditions similar to~\eqref{eq:L2} hold for higher length index sets as well. Since operators of even length $\ell = 2j$ contain $j$ pairs from $\mL_2$, the corresponding index sets in $\mC_0$ satisfy
\begin{eqnarray}
i_1+i_2+\ldots+i_{2j} &=& 0\,\textrm{mod}\,(2n+1), \nonumber \\
&\forall& \hspace{-1mm} (i_1,i_2,\ldots,i_{2j}) \in \mC_0.
\end{eqnarray}

Similarly, since the odd length operators have $\Gamma_0$ appended to the even length operators, the index sets of length $\ell = 2j+1$ in $\mC_0$ satisfy,
\begin{eqnarray}
i_{1}+i_2+\ldots+i_{2j+1} &=& 0\,\textrm{mod}\,(2n+1), \nonumber \\
&\forall& \hspace{-1mm} (i_1,i_2,\ldots,i_{2j+1}) \in \mC_0.
\end{eqnarray}

To sum up, for any $3\leq \ell\leq 2n$, our construction ensures that index sets of length-$\ell$ belonging to $\mC_0$ sum to the same value. The conditions of Lemma~\ref{lem:lengthl} are therefore satisfied, with the quantity $c_{\ell}$ in \eqref{eq:sumC0} taking the value $c_{\ell} =0$, for all $\ell = 3,\ldots,2n$. Now, we can simply evoke Lemma~\ref{lem:lengthl} to prove that, when $2n+1$ is prime, the higher length operators in $\mC_0$ cycle through mutually disjoint sets of operators.
\end{proof}

\subsection{Constructing $L|n$ classes for prime values of $L$}

Next, we show that it is possible to obtain an arrangement of operators into $L$ classes in dimension $2^{n}$, when $L$ is prime and $L|n$, such that the unitary $U$ that cyclically permutes $L$ $\Gamma$-operators
also permutes the coresponding classes.

\begin{theorem}[\textbf{$L|n$ classes for prime $L$}]\label{Lclasses}
Suppose $U$ is a unitary that cycles through sets of $L$ $\Gamma$-operators from the set $\mG^{\textrm{(full)}}\setminus\{\Gamma_{2n}\}$ in dimension $2^{n}$, where $L$ is prime and $L|n$. Then there exist $L$ classes $\mC_{0},\mC_1,\ldots\mC_{L-1}$ that satisfy properties \textbf{(P1)} through \textbf{(P3)}.
\end{theorem}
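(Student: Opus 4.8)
The plan is to prove existence by constructing a single seed class $\mC_0$ of $d-1$ mutually commuting operators from $\setS\setminus\{\id\}$ and then \emph{defining} the remaining classes as its $U$-images, $\mC_k := U^k(\mC_0)$ for $1\le k\le L-1$. With this definition property \textbf{(P3)} is immediate, and property \textbf{(P1)} follows for every class because conjugation by the unitary $U$ preserves commutators: if the elements of $\mC_0$ commute then so do those of each $\mC_k$. All the real content therefore lies in \textbf{(P2)}, the mutual disjointness, which (exactly as in Theorem~\ref{thm:(2n+1)classes}) reduces to showing that each operator of $\mC_0$ has a $U$-orbit of size exactly $L$ and that no two operators of $\mC_0$ lie in a common orbit.

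First I would fix the combinatorial bookkeeping. Since $L\mid n$ we have $L\mid 2n$, so $U$ partitions the $2n$ generators $\Gamma_0,\ldots,\Gamma_{2n-1}$ into $t:=2n/L$ cycles of length $L$; I label each generator by a pair (cycle, position) with position in $\mathbb{Z}_L$, so that $U$ simultaneously advances the position by $+1$ in every cycle and $U^L=\id$. I would then build $\mC_0$ following the template of Theorem~\ref{thm:(2n+1)classes}: take $\Gamma_0$ as the singleton, introduce a family of commuting length-$2$ building blocks with pairwise distinct spacings, and generate the higher-length elements as commuting products of these blocks and the singleton, chosen so that the per-cycle occupancy pattern is controlled. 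A binomial count analogous to~\eqref{eq:countC0} should again give $|\mC_0| = 2^n-1 = d-1$, which is precisely the size of a maximal commuting (isotropic) class, so \textbf{(P1)} is saturated.

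For the disjointness argument the length-$2$ operators are handled verbatim by Lemma~\ref{lem:spacing}, since $U$ preserves the spacing, and short higher-length operators fall under Lemma~\ref{lem:lengthl} via the position-sum invariant, which advances by the length $\ell$ at each step of $U$ and hence separates the classes whenever $\ell\not\equiv 0 \pmod L$, using primality of $L$. The difficulty I expect to dominate the proof is exactly the regime the stated lemmas do \emph{not} reach: because the generators are now split into $t$ cycles of length only $L$, the elements of $\mC_0$ must span several cycles and necessarily include products whose total length is $\ge L$, indeed $\equiv 0 \pmod L$, for which the single global sum is constant along the orbit and Lemma~\ref{lem:lengthl} is silent.

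To close this gap I would refine the invariant from a single global position-sum to the vector of \emph{per-cycle} position-sums $(s_0,\ldots,s_{t-1})$ together with the $U$-invariant occupancy vector $\vec n=(n_0,\ldots,n_{t-1})$; under $U$ each $s_c$ advances by its own $n_c$. The key observations are that an operator is $U$-invariant only when every cycle is either empty or completely filled, so requiring each chosen operator to leave at least one cycle partially occupied forces every orbit to have full size $L$; and that two orbit-equivalent elements must be related by a shift $k$ common to \emph{all} cycles, so anchoring the construction to the singleton's cycle (insisting on a fixed canonical position-set there) forces $k=0$ and hence equality. The main obstacle is thus to exhibit a concrete choice of building blocks that simultaneously (a) keeps all elements mutually commuting, (b) never fills a whole cycle, and (c) admits such an anchor, while still reaching the count $d-1$; I expect this to be where $L\mid n$ (rather than merely $L\mid 2n$) is used, pairing the $t=2n/L$ cycles so that products can be assembled with balanced, anchorable occupancy.
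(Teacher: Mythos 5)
Your architecture coincides with the paper's: build a seed class $\mC_0$ of $d-1$ commuting operators, define $\mC_k := U^k(\mC_0)$ so that \textbf{(P3)} is automatic and \textbf{(P1)} propagates by conjugation, and reduce \textbf{(P2)} to showing that distinct elements of $\mC_0$ have disjoint $U$-orbits, via Lemma~\ref{lem:spacing} for length-$2$ operators and a position-sum invariant for longer ones. You have also correctly isolated the crux: once the $2n$ generators split into $2n/L$ cycles of length $L$, the class $\mC_0$ necessarily contains products spanning several cycles whose length is a multiple of $L$, for which the single global sum of Lemma~\ref{lem:lengthl} is uninformative. The problem is that you stop there. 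The theorem is an existence claim proved by construction, and the construction is exactly what you defer: you never exhibit the singleton, the $n-1$ commuting length-$2$ building blocks, the count reaching $2^n-1$, or the verification that all products land in distinct orbits, and you state explicitly that finding a choice satisfying your conditions (a)--(c) is ``the main obstacle.'' An outline that correctly names the obstacle but does not overcome it does not prove the theorem.

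For comparison, the paper resolves this with an explicit algorithm: take $\Gamma_{(L-1)/2}$ as the singleton, choose $(L-3)/2$ pairs inside the singleton's cycle and $(L-1)/2$ pairs inside each other cycle with pairwise distinct spacings (leaving one generator per cycle unused), and pair the leftovers across cycles to reach $|\mL_{2}| = n-1$; disjointness is then argued by splitting higher-length operators into those supported on a single cycle (Lemma~\ref{lem:lengthl} with per-cycle sums $0$ or $(L-1)/2 \bmod L$) and those spanning several cycles, for which one locates a cycle-restriction $\mK_j$ of size strictly less than $L$ and applies the sum argument there. It is worth noting that your instinct about the dangerous case is sound, and in fact sharper than the paper's own treatment: the paper's $\mC_0$ does contain products that completely fill two cycles (length $2L \leq 2n-1$ as soon as $n \geq 2L$), and such an operator is exactly $U$-invariant, which is the degeneracy your ``no fully filled cycle'' condition is designed to exclude and which the paper's type-(b) argument overlooks by ignoring empty cycle-restrictions. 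So your refined per-cycle invariant is a reasonable --- arguably necessary --- idea; but without a concrete seed class meeting your conditions and the count $d-1$, the proposal remains a plan rather than a proof.
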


\noindent\textbf{Proof:} Note that since $L|n$ we have $n = rL$ for some positive integer $r$. The set of $2n$ Clifford generators $\Gamma_{0},\Gamma_{1},\ldots,\Gamma_{2n-1}$ can then be partitioned into $2r$ sets as follows:
\begin{eqnarray}
\mG^{(0)} &=& \{\Gamma_{0}, \Gamma_{1}, \ldots, \Gamma_{L-1}\}, \nonumber \\
\mG^{(1)} &=& \{\Gamma_{L}, \Gamma_{L+1},\ldots, \Gamma_{2L-1}\}, \nonumber \\
\vdots && \vdots \nonumber \\
\mG^{(2r-1)} &=& \{\Gamma_{(2r-1)L}, \Gamma_{(2r-1)L+1},\ldots,\Gamma_{2n-1}\}
\end{eqnarray}
Without loss of generality, we can assume the unitary $U$ is constructed such that it cyclically permutes the $L$ operators within each set, as follows.
\begin{eqnarray}
U &:& \Gamma_0\rightarrow\Gamma_1\rightarrow\ldots\rightarrow\Gamma_{L-1}\rightarrow\Gamma_0 , \nonumber \\
&& \Gamma_{L}\rightarrow\ldots\rightarrow\Gamma_{2L-1}\rightarrow\Gamma_{L} , \nonumber \\
&& \vdots \nonumber \\
&& \Gamma_{(2r-1)L}\rightarrow\ldots\rightarrow\Gamma_{2n-1}\rightarrow\Gamma_{(2r-1)L}. \nonumber
\end{eqnarray}

Once again, we begin with an algorithm for picking $d-1$ elements for the class $\mC_0$. The algorithm closely follows the one outlined in the previous section, barring some minor modifications. \\

\bigskip

\noindent \textbf{\underline{Algorithm}}
\begin{enumerate}
\item[1.] The ``middle'' element from $\mG^{(1)}$, $\Gamma_{(L-1)/2}$, is picked as the singleton element of $\mC_0$.
\item[2.] The $(n-1)$ length-2 operators which commute with $\Gamma_{(L-1)/2}$ are picked as follows -- 
\item[(a)] $\frac{L-3}{2}$ pairs are picked from $\mG^{(0)}\setminus\{\Gamma_{(L-1)/2}\}$
\begin{equation}
\mL_{2}^{(0)} = \{ \,\Gamma_{1}\Gamma_{L-1}, \, \Gamma_{2}\Gamma_{L-2},\ldots,\Gamma_{(L-3)/2}\Gamma_{(L+3)/2} \, \}, \nonumber
\end{equation}
leaving $\Gamma_0$ and $\Gamma_{(L+1)/2}$ unused.
\item[(b)] $\frac{L-1}{2}$ pairs are picked from each of the sets $\mG^{(1)}$ through $\mG^{(2r-1)}$,
\begin{eqnarray}
\mL_{2}^{(1)} &=& \{ \,\Gamma_{L+1}\Gamma_{2L-1}, \; \Gamma_{L+2}\Gamma_{2L-2}, \; \ldots , \nonumber \\
 && \; \; \; \ldots , \; \Gamma_{L+(L-1)/2}\Gamma_{L+(L+1)/2} \, \}, \nonumber \\
\vdots && \vdots \nonumber \\
\mL_{2}^{(2r-1)} &=& \{ \,\Gamma_{(2r-1)L+1}\Gamma_{2n-1}, \; \Gamma_{(2r-1)L+2}\Gamma_{2n-2}, \; \ldots , \nonumber \\
 &&   \; \Gamma_{(2r-1)L+(L-1)/2}\Gamma_{(2r-1)L+(L+1)/2} \, \}, \nonumber 
\end{eqnarray}
leaving the first operator in each set unused. 
\item[(c)] Finally, the unused $\Gamma$-operators from different sets are put together as specified below, to get the remaining $r$ length-$2$ operators:
\begin{equation}
\mL_{2}^{(2r)} = \{\,\Gamma_{0}\Gamma_{L}, \;\Gamma_{2L}\Gamma_{3L}, \ldots,\Gamma_{(2r-2)L}\Gamma_{(2r-1)L}\,\}. \nonumber 
\end{equation}
The set of length-$2$ operators is then given by 
\[
\mL_{2} = \mL_{2}^{(0)}\cup\mL_{2}^{(1)}\ldots\cup\mL_{2}^{(2r-1)}\cup\mL_{2}^{(2r)}
\]
which gives $|\mL_{2}| = \frac{L-3}{2} + (2r-1)\left(\frac{L-1}{2}\right) + r = rL - \frac{2r-2}{2} + r = n-1$.
\item[3.] Pick higher length operators from $\setS$ that commute with $\Gamma_{(L-1)/2}$ and $\mL_{2}$, by combining $\Gamma_{(L-1)/2}$ with appropriate combinations of the length-2 operators. As before, any even-length operator of length $\ell = 2i$ is obtained by combining $i$ length-2 operators from $\mL_2$. Any operator of odd-length $\ell=2i+1$, is created by appending $\Gamma_{(L-1)/2}$ to a length-$2i$ operator.
\end{enumerate}
Putting together all the operators created in Steps[1]-[3], we get the desired cardinality for the class (see \eqref{eq:countC0}), that is, $|\mC_{0}| = 2^{n} -1$. \\

\noindent\textbf{Proof of properties \textbf{(P1)} through \textbf{(P3)}:} The different length operators have again been picked in such a way as to ensure that they all commute with each other. Since the remaining $L-1$ classes are generated by successive applications of the unitary $U$ to the elements of $\mC_{0}$, we have $U: \mC_{i}\rightarrow\mC_{(i+1)\mod L}$. Thus \textbf{(P1)} and \textbf{(P3)} is satisfied. It remains to prove that the classes constructed here also satisfy property (\textbf{P2}).

As in the earlier case of $2n+1$ classes, the operators in each of the sets $\{\mL_{2}^{(0)}, \mL_2^{{(1)}},\ldots,\mL_2^{(2r-1)}\}$ correspond to unique values of the spacing function:
\begin{equation}
S(\mL_{2}^{(i)}) \equiv \{L-2,L-4,\ldots, 1\}, \forall i \in [0,2r-1], \nonumber
\end{equation}
which guarantees, by Lemma~\ref{lem:spacing} that these operators cycle through mutually disjoint sets under $U$. Since the operators in $\mL_2^{(2r)}$ are formed by combining $\Gamma$-operators from different sets $\mG^{(i)}$, each of them cycles through a different set of operators under $U$. Thus we see that all the length-$2$ operators in $\mC_0$ cycle through mutually disjoint sets. 

Before we proceed to discuss the higher length operators, we make one further observation about the length-$2$ operators. The operators in $\mL_2$ correspond to index sets which satisfy
\begin{equation}\label{eq:sumL2L}
\mL_{2}(\mC_{1}) = \{\Gamma_{i_1}\Gamma_{i_2}| \, i_1+i_2 =0\,\textrm{mod}\,L\}.  
\end{equation}
In particular, the length-2 operators in the set $\mL^{(2r)}$ have been picked carefully so as to ensure that the above constraint is satisfied. In fact, this was the rationale behind leaving out the first operator in each of the sets $\mG^{(i)}$ while choosing the corresponding length-$2$ elements in $\mL_2^{(i)}$.

The higher length operators in $\mC_0$ can be of two types:
\begin{enumerate}
\item[(a)]Those that are comprised of $\Gamma$-operators from a single set $\mG^{(i)}$ alone, and 
\item[(b)]Operators that comprise $\Gamma$-operators from more than one set. 
\end{enumerate}
Since a type-(a) operator cannot cycle into a type-(b) operator under the action of $U$, these two cases can be examined separately.\\

\bigskip

\noindent\textbf{Type-(a):} The maximum length that an operator of type-(a) can have, as per our construction, is $L-1$. We have ensured this by leaving at least one operator of each of the sets $\mG^{(i)}$ unused in constructing the length-$2$ operators. Furthermore, the constraint in \eqref{eq:sumL2L} implies that the index sets corresponding to such higher length operators in $\mC_0$, sum to the same value modulo $L$. More precisely, any even-length index set of length $\ell=2j$, where the indices are all drawn from a given set $\mG^{(i)}$, satisfies
\begin{eqnarray}\label{eq:evenL}
i_{1}+i_{2}+\ldots + i_{l} &=& 0\,\textrm{mod}\,L, \nonumber \\
&\forall& \hspace{-1mm} (i_1,i_2,\ldots,i_l) \in \mC_{0}.
\end{eqnarray}
And any index set of odd length $\ell = 2j+1$ satisfies 
\begin{eqnarray}\label{eq:oddL}
i_{1}+i_{2}+\ldots + i_{l} &=& \left(\frac{L-1}{2}\right)\,\textrm{mod}\,L, \nonumber \\
&\forall& \hspace{-1mm} (i_1,i_2,\ldots,i_l) \in \mC_{0}.
\end{eqnarray}
Then, invoking Lemma~\ref{lem:lengthl} with $c_{\ell} = 0$ for even values of $\ell$ and $c_\ell = (L-1)/2$ for odd values of $\ell$, we see that no operator of type-(a) can belong to more than one class, for prime values of $L$.\\

\bigskip

\noindent\textbf{Type-(b):} An operator of type-(b) is a product of operators from smaller sets $\mK_j \subseteq \mG^{(j)}$.
Consider a length-$\ell$ operator, $\mO$ which comprises $\ell_{0}$ $\Gamma$-operators from $\mG^{(0)}$, $\ell_{1}$ operators from $\mG^{(1)}$, and in general, $\ell_{i}$ from the set $\mG^{(i)}$. 
\[
\mO = \underbrace{\Gamma_{i_1}\ldots\Gamma_{i_{\ell_0}}}_{\mK_0\subseteq\mG^{(0)}}\underbrace{\Gamma_{j_1}\ldots\Gamma_{j_{\ell_1}}}_{\mK_1\subseteq\mG^{(1)}}\ldots\underbrace{\Gamma_{k_1}\ldots\Gamma_{k_{\ell_{2r-1}}}}_{\mK_{2r-1}\subseteq\mG^{(2r-1)}}
\]
Note that by our construction, the operator $\mO$ exists in more than one class if and only if, for all $\mK_{j}$ 
the product of all operators in $\mK_{j}$ also belongs to more than one class.
In what follows, we argue that our construction ensures that this is not possible.
In particular, given a set of length-$\ell$ operators in $\mC_0$ which can be broken down into smaller sets as described above, we will argue that there exists at least one set $\mK_{j}$ in every such length-$\ell$ operator $\mO$, such that the products of operators in $\mK_{j}$ corresponding to different length-$\ell$ operators cycle through mutually disjoint sets, as defined earlier.

Note the following two facts about the subsets $\mK_j$. First, our construction ensures that any subset $\mK_j \subseteq\mG^{(j)}$ of a given size $\ell_j$, satisfies either \eqref{eq:evenL} or \eqref{eq:oddL} depending on $\ell_j$ being even or odd. Second, note that the maximum size of these subsets 
is $\ell_{j}\leq L$. However, in order to invoke Lemma~\ref{lem:lengthl}, we still require $\ell_j$ to be strictly less than $L$. 
Our goal is hence to show that every length-$\ell$ operator must have at least one subset $\mK_j$ of size $\ell_j < L$. 

Suppose there exists a length-$\ell$ operator such that every subset is of size $L$. Then, the operator itself has to be of length
\begin{equation}
\ell = \ell_0 + \ell_1 + \ldots + \ell_{2r-1} = 2rL = 2n
\end{equation}
However the maximum value of $\ell$ in our construction is $2n-1$, implying that atleast one of the $2r$ subsets must be of a size strictly smaller than $L$. And, for such a subset of size less than $L$, constraints \eqref{eq:evenL} and \eqref{eq:oddL} ensure that the same subset cannot be found in more than one class, provided $L$ is prime. \finproof

\section{A simple lower bound on min-entropy}

The min-entropy of the distribution that an orthonormal basis $\mathcal{B}_{j} = \{\ket{b^{(j)}}\}_b$ induces on a state $\rho \in \mathcal{H}$ is given by
\begin{equation}
\mathcal{H}_{\infty}(\mathcal{B}_{j}|\rho) = -\log\max_{b}\textrm{Tr}[\ket{b^{(j)}}\bra{b^{(j)}}\rho]
\end{equation}
We are looking to evaluate a lower bound on the average min-entropy of any $L$ mutually unbiased bases (not necessarily coming from our construction) in a $d$-dimensional Hilbert space. The average min-entropy is given by -
\begin{eqnarray}\label{eq:jensen1}
\frac{1}{L}\sum_{j=0}^{L-1}\mathcal{H}_{\infty}(\mathcal{B}_{j}|\rho) &=& -\frac{1}{L}\sum_{j} \log\max_{b^{(j)} \in \{0,...,d-1\}}\langle b^{(j)}|\rho|b^{(j)}\rangle \nonumber \\
&\geq& -\log\frac{1}{L}\sum_{j=0}^{L-1}\max_{b^{(j)}}\langle b^{(j)}|\rho|b^{(j)}\rangle
\end{eqnarray}
using Jensen's inequality. The problem of finding an optimal uncertainty relation for the min-entropy, thus reduces to the problem of maximizing over all $\rho \in \mathcal{H}$, the quantity $\sum_{j=0}^{L-1}\max_{b^{(j)}\in \{0,...,d-1\}}\expect{b^{(j)}}{\rho}{b^{(j)}}$. It is easy to see that this maximum is always attained at a pure state, so we can restrict the problem to an optimization over pure states. We can simplify the problem of finding the lower bound of~\eqref{eq:jensen1} by recasting it as follows.

Consider states of the form $P_{\vec{b}} = \frac{1}{L}\sum_{j=0}^{L-1}\ket{b^{(j)}}\bra{b^{(j)}}$ where $\vec{b} = (b^{(0)}, b^{(1)},..., b^{(L-1)})$ denotes a string of basis elements, that is, $b^{(j)} \in \{0,1,...,d-1\}$. Suppose we can show for all possible strings $\vec{b}$,
\begin{equation}\label{eq:maxp}
\max_{\ket{\psi}} \textrm{Tr}(P_{\vec{b}}\ket{\psi}\bra{\psi}) \leq \zeta\ .
\end{equation}
Then, since $\frac{1}{L}\sum_{j}|\langle b^{(j)}\ket{\psi}|^{2} = \textrm{Tr}[P_{\vec{b}}\ket{\psi}\bra{\psi}]$, the bound is simply
\begin{equation}\label{eq:evbound}
\frac{1}{L}\sum_{j=0}^{L-1}\mathcal{H}_{\infty}(\mathcal{B}_{j}|\ket{\psi}\bra{\psi}) \geq -\log\zeta\ .
\end{equation}
We have thus reduced the problem to one of finding the maximum eigenvalue of operators of the form $P_{\vec{b}}$, over all possible strings $\vec{b}$.

\subsection{A new bound for smaller sets of $L<d$ MUBs}

We now prove Lemma~\ref{lem:smallLBound}, restated here for convenience.
\begin{lemma}
Let $\mB_0,\ldots,\mB_{L-1}$ be a set of mutually unbiased bases in dimension $d$. Then, 
\begin{equation}
\frac{1}{L}\sum_{j=0}^{L-1}\mathcal{H}_{\infty}(\mathcal{B}_{j}|\ket{\psi}) \geq -\log\left[\frac{1}{L}\left(1 + \frac{L-1}{\sqrt{d}}\right)\right]\ . 
\end{equation}
\end{lemma}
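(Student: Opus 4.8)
The plan is to use the reduction already set up above: by Jensen's inequality and the reformulation in~\eqref{eq:maxp}--\eqref{eq:evbound}, it suffices to produce a single constant $\zeta$, valid for every string $\vec{b}$, such that $P_{\vec{b}} \leq \zeta\,\id$, i.e. such that the largest eigenvalue of $P_{\vec{b}} = \frac{1}{L}\sum_{j=0}^{L-1}\proj{b^{(j)}}$ is at most $\zeta$; the bound $-\log\zeta$ then follows at once. I would therefore take $\zeta = \frac{1}{L}\left(1 + \frac{L-1}{\sqrt{d}}\right)$ and devote the whole proof to establishing $\lambda_{\max}(P_{\vec{b}}) \leq \zeta$ uniformly in $\vec{b}$.

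The key step is to control $L\,P_{\vec{b}} = \sum_{j=0}^{L-1}\proj{b^{(j)}}$, a sum of $L$ rank-one projectors onto the unit vectors $\ket{b^{(j)}}$. The nonzero spectrum of this operator coincides with that of the $L\times L$ Gram matrix $G$ with entries $G_{jk} = \inp{b^{(j)}}{b^{(k)}}$. Since each basis is orthonormal the diagonal entries are $G_{jj}=1$, and since the bases are mutually unbiased every off-diagonal entry has modulus $|G_{jk}| = 1/\sqrt{d}$. Writing $G = \id + E$ with $E$ Hermitian and zero diagonal, each Gershgorin disc of $E$ is centred at $0$ with radius $\sum_{k\neq j}|E_{jk}| = (L-1)/\sqrt{d}$, so every eigenvalue of $E$ lies in $[-(L-1)/\sqrt{d},\,(L-1)/\sqrt{d}]$. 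Hence $\lambda_{\max}(G) = 1 + \lambda_{\max}(E) \leq 1 + (L-1)/\sqrt{d}$, which gives $\lambda_{\max}(P_{\vec{b}}) = \frac{1}{L}\lambda_{\max}(G) \leq \zeta$ for every $\vec{b}$, and the lemma follows.

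An equivalent route, matching the reference to Schaffner~\cite{chris:diss} and Kittaneh~\cite{kittaneh3:normsum}, is to apply Kittaneh's norm inequality for a sum of positive operators directly to the $\proj{b^{(j)}}$: since each is a rank-one projector and $\|\proj{b^{(i)}}\,\proj{b^{(j)}}\| = |\inp{b^{(i)}}{b^{(j)}}| = 1/\sqrt{d}$ for $i\neq j$, one obtains $\|\sum_j \proj{b^{(j)}}\| \leq 1 + (L-1)/\sqrt{d}$, the same bound. The only real subtlety — and the step I would be most careful about — is that we need a single bound that holds for all $\vec{b}$ simultaneously and for \emph{arbitrary} MUBs: this works precisely because mutual unbiasedness fixes the modulus of every cross overlap at $1/\sqrt{d}$ independently of $\vec{b}$, while the unknown phases enter only through the off-diagonal of $G$ (respectively through $E$) and are harmlessly discarded by the triangle inequality in Gershgorin's theorem. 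Since we require only an upper bound on $\lambda_{\max}$, this loss of phase information costs nothing in the worst case, and the resulting $\zeta$ is attained exactly when the cross terms add in phase.
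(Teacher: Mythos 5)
Your proposal is correct, and your primary route differs from the paper's. The paper proves the eigenvalue bound $\lambda_{\max}(P_{\vec{b}}) \leq \frac{1}{L}\bigl(1 + \frac{L-1}{\sqrt{d}}\bigr)$ by invoking Schaffner's inequality (derived via Kittaneh's techniques) for a sum of $L$ projectors, $\|\sum_j A_j\| \leq 1 + (L-1)\max_{j<k}\|A_j A_k\|$, and then bounding $\|\proj{b^{(j)}}\proj{b^{(k)}}\| = |\inp{b^{(j)}}{b^{(k)}}| = 1/\sqrt{d}$ --- exactly the ``equivalent route'' you sketch second. Your main argument instead passes to the $L\times L$ Gram matrix $G_{jk}=\inp{b^{(j)}}{b^{(k)}}$, whose nonzero spectrum agrees with that of $\sum_j\proj{b^{(j)}}$, and applies Gershgorin's theorem to get $\lambda_{\max}(G)\leq 1+(L-1)/\sqrt{d}$. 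This is more elementary and self-contained: it avoids the external norm inequality as a black box, exploits the rank-one structure directly, and makes transparent that only the moduli of the cross overlaps (fixed at $1/\sqrt{d}$ by unbiasedness) matter, which is why the bound is uniform over all strings $\vec{b}$. The paper's route, by contrast, applies to projectors of arbitrary rank, which is why it is stated in that generality in~\cite{chris:diss}; for this lemma the two give identical constants. One minor caution: your closing remark that $\zeta$ is ``attained exactly when the cross terms add in phase'' is heuristic --- saturation of a Gershgorin disc requires more than in-phase off-diagonal entries --- but nothing in the proof depends on it.
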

\begin{proof}
Note that by~\eqref{eq:evbound}, it is sufficient to determine $\zeta$ in~\eqref{eq:maxp}.
To solve this eigenvalue problem
we recall a result of Schaffner~\cite{chris:diss} proved using the methods of Kittaneh~\cite{kittaneh3:normsum}, that for a set of $L$ orthogonal projectors $A_0,A_1,\ldots,A_{L-1}$, the following bound holds:
\begin{equation}
\parallel\sum_{j=0}^{L-1}A_{j}\parallel \leq 1 + (L-1)\max_{0\leq j<k\leq L-1}\parallel A_{j}A_{k}\parallel
\end{equation}
where $\parallel (.)\parallel$ denotes the operator norm, which here is simply the maximum eigenvalue for Hermitian operators. Applying this 
result to sums of basis vectors $\ket{b^{(j)}}$, we have,
\begin{eqnarray}
&&\hspace{-1mm} \parallel \sum_{j=0}^{L-1}\ket{b^{(j)}}\bra{b^{(j)}} \parallel \; \; \leq \; \; 1 + \\
&& \; \; (L-1)\max_{0\leq j<k\leq L-1}\parallel (\ket{b^{(j)}}\bra{b^{(j)}})(\ket{b^{(k)}}\bra{b^{(k)}})\parallel \nonumber 
\end{eqnarray}
which implies
\begin{eqnarray}\label{eq:pvec}
\hspace{-2mm}&& \; \parallel P_{\vec{b}}\parallel \; \; \leq \; \; \frac{1}{L} + \\ 
&&\hspace{-1mm} \left(\frac{L-1}{L}\right)\max_{0\leq j<k\leq L-1}\parallel  \ket{b^{(j)}}(\bra{b^{(j)}}\ket{b^{(k)}})\bra{b^{(k)}}\parallel \nonumber
\end{eqnarray}
Recall, that for all $b^{(j)},b^{(k)} \in \{0,\ldots,d-1\}$
\begin{align}
\inp{b^{(j)}}{b^{(k)}} = e^{i\phi}\frac{1}{\sqrt{d}}, \, \mbox{for any} \; j \neq k,
\end{align}
where $\phi$ denotes some phase factor. Further, since the vectors $\ket{b^{(j)}}$ are normalized, the
Cauchy-Schwarz inequality gives
\begin{align}
\parallel \ket{b^{(j)}}\bra{b^{(k)}} \parallel \leq 1, \, \mbox{for any} \; b^{(j)}, b^{(k)} \in\{0,\ldots,d-1\}. \nonumber
\end{align}
Combining these with \eqref{eq:pvec} gives the following bound on the maximum eigenvalue of the operator $P_{\vec{b}}$ :
\begin{equation}
\zeta = \frac{1}{L}\left(1 + \frac{L-1}{\sqrt{d}}\right)\ .
\end{equation}
By~\eqref{eq:evbound}, this immediately proves our claim.
\end{proof}

\subsection{A stronger bound for the complete set of $d+1$ MUBs}

Here, we present an alternate approach to bound the maximum eigenvalue of $P_{\vec{b}}$, using a Bloch vector like representation of the MUB basis states. The bound that we obtain here, stated in Lemma~\ref{lem:largeLBound}, is stronger than the last one when $L>d$. In particular, when we consider the complete set ($L=d+1$) of MUBs in any dimension $d$, this approach yields the best known bound.

\begin{lemma}
Let $\mB_0,\ldots,\mB_{L-1}$ be a set of mutually unbiased bases in dimension $d$. Then,
\begin{equation}
\frac{1}{L}\sum_{j=0}^{L-1}\mathcal{H}_{\infty}(\mathcal{B}_{j}|\ket{\psi}) \geq -\log\left[\frac{1}{d}\left(1 + \frac{d-1}{\sqrt{L}}\right)\right] \ .
\end{equation}
\end{lemma}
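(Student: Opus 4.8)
The plan is to reduce, exactly as in the proof of Lemma~\ref{lem:smallLBound}, to bounding the largest eigenvalue $\zeta$ of $P_{\vec b} = \frac{1}{L}\sum_{j=0}^{L-1}\proj{b^{(j)}}$; by~\eqref{eq:evbound} any such $\zeta$ with $\max_{\ket\psi}\tr(P_{\vec b}\proj\psi)\leq\zeta$ for all $\vec b$ yields the claimed entropic bound. The new ingredient, in place of the Kittaneh triangle-inequality estimate of~\eqref{eq:pvec}, is to compute this maximum through a Bloch-vector representation of the states.

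First I would fix an orthogonal basis $\{\lambda_k\}_{k=1}^{d^2-1}$ of traceless Hermitian operators normalised so that $\tr(\lambda_k\lambda_l)=2\delta_{kl}$, and write every unit-trace Hermitian operator as $\frac{1}{d}\id+\frac{1}{2}\sum_k a_k\lambda_k$, associating to it the real Bloch vector $\vec a$. Two such operators with Bloch vectors $\vec a$ and $\vec v$ then have Hilbert-Schmidt inner product $\frac{1}{d}+\frac{1}{2}\vec a\cdot\vec v$, and a rank-one projector $\proj\phi$ satisfies $|\vec v|^2=2(d-1)/d$ since $\tr[(\proj\phi)^2]=1$. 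Writing $\rho=\proj\psi$ with Bloch vector $\vec a$ and each $\proj{b^{(j)}}$ with Bloch vector $\vec v^{(j)}$, this gives
\[
\tr(P_{\vec b}\rho)=\frac{1}{d}+\frac{1}{2L}\Big(\sum_{j}\vec v^{(j)}\Big)\cdot\vec a.
\]

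The crucial step is to exploit mutual unbiasedness at the level of the Bloch vectors. Since $\vec v^{(j)}\cdot\vec v^{(k)}=2\big(|\inp{b^{(j)}}{b^{(k)}}|^2-1/d\big)$, the MUB condition $|\inp{b^{(j)}}{b^{(k)}}|^2=1/d$ for $j\neq k$ forces the $L$ vectors $\vec v^{(0)},\ldots,\vec v^{(L-1)}$ to be \emph{pairwise orthogonal}, whence $|\sum_j\vec v^{(j)}|^2=\sum_j|\vec v^{(j)}|^2=2L(d-1)/d$. Applying the Cauchy-Schwarz inequality to the inner product above, together with $|\vec a|=\sqrt{2(d-1)/d}$ for the pure state $\rho$, gives
\[
\tr(P_{\vec b}\rho)\leq\frac{1}{d}+\frac{1}{2L}\sqrt{\frac{2L(d-1)}{d}}\,\sqrt{\frac{2(d-1)}{d}}=\frac{1}{d}\Big(1+\frac{d-1}{\sqrt L}\Big),
\]
so one may take $\zeta=\frac{1}{d}\big(1+(d-1)/\sqrt L\big)$ and the lemma follows from~\eqref{eq:evbound}.

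The step I expect to require the most care is the Cauchy-Schwarz bound: it is obtained by relaxing the positivity constraint on $\rho$ to the mere norm constraint $|\vec a|^2=2(d-1)/d$, so the optimiser $\vec a\propto\sum_j\vec v^{(j)}$ need not correspond to a genuine quantum state, and the resulting $\zeta$ is therefore only an upper bound on the true maximal eigenvalue. This relaxation is exactly why the estimate is loose for small $L$ yet becomes the strongest available one for the full set $L=d+1$, where the orthogonal Bloch vectors nearly fill the available Bloch-sphere radius. I would also note that the argument uses only the pairwise MUB inner products and the purity of the projectors, so it applies to \emph{any} set of MUBs, independently of the construction of Section~\ref{sec:defs}.
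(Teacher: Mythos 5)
Your proposal is correct and follows essentially the same route as the paper: both pass to the Bloch-vector representation with $\tr(\hat A_i\hat A_j)=2\delta_{ij}$, use mutual unbiasedness to get pairwise orthogonality of the basis-state Bloch vectors, and then maximize $\bigl(\sum_j\vec v^{(j)}\bigr)\cdot\vec a$ subject only to the pure-state norm constraint (the paper phrases this as taking $\vec a$ parallel to the average vector, which is exactly your Cauchy--Schwarz step). Your closing remarks about the relaxation of positivity and the resulting looseness also mirror the paper's own discussion.
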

\begin{proof}
First, we switch to working in a basis of Hermitian operators, so that every state in $\mathcal{H}$ has a parametrization in terms of vectors in a real vector space. Any state $\rho \in \mathcal{H}$ can be written as:
\begin{equation}\label{eq:Hobasis}
\rho = \frac{1}{d}\mathbb{I} + \frac{1}{2}\sum_{i=1}^{d^2-1}\alpha^{(i)}\hat{A}_{i}
\end{equation}
where $\{\hat{A}_{i}\}$ are Hermitian, trace-less operators that are orthogonal with respect to the Hilbert-Schmidt norm: $\textrm{Tr}[\hat{A}_{i}^\dagger\hat{A}_{j}] = 2\;\delta_{ij}$, and the scalars $\{\alpha^{(i)}\}_i \in \mathbb{R}$. Thus we can parameterize any state in our $d$-dimensional Hilbert space with a vector $\vec{\alpha} = (\alpha^{(1)},...,\alpha^{(d^2-1)}) \in \mathbb{R}^{d^2-1}$. When $\rho$ is a pure state ($\textrm{Tr}[\rho^{2}] = 1$), the vector $\vec{\alpha}$ corresponding to this pure state satisfies the following normalization condition
\begin{eqnarray}\label{eq:alphanorm}
\textrm{Tr}\left[\left(\frac{1}{d}\mathbb{I} + \frac{1}{2}\sum_{i=1}^{d^2-1}\alpha^{(i)}\hat{A}_{i}\right)^{2}\right] &=& 1 \nonumber \\
\Rightarrow \frac{1}{d} + \frac{1}{2}\sum_{i=1}^{d^2-1}|\alpha^{(i)}|^{2} &=& 1 \nonumber \\
\Rightarrow |\vec{\alpha}| = \sqrt{\sum_{i=1}^{d^2-1}|\alpha^{(i)}|^{2}} &=& \sqrt{\frac{2(d-1)}{d}}
\end{eqnarray}

Furthermore, in this representation, the vectors $\{\vec{\alpha}_{(b,j)}\}$ corresponding to the MUB states $\{\ket{b^{(j)}}\}$ satisfy the following special properties:- 
\begin{itemize}
\item (M1) \emph{Normalization}:  $\textrm{Tr}[\ket{b^{(j)}}\bra{b^{(j)}}\ket{b^{(j)}}\bra{b^{(j)}}] = 1$ implies that $|\vec{\alpha}_{(b,j)}| = \sqrt{\frac{2(d-1)}{d}}$, $\forall \; b \in \{0,...,d-1\} \; , \; j \in \{0,...,L-1\}$. (By an argument similar to the one that leads to \eqref{eq:alphanorm}.)
\item (M2) \emph{Constant inner-product}:  $|\langle b^{(j)}| \hat{b}^{(k)}\rangle|^{2} = \frac{1}{d}$ implies that $\vec{\alpha}_{(b,j)}.\vec{\alpha}_{(\hat{b},k)} = 0 , \; \forall \; j \, \neq \, k, \; \forall \; b,\hat{b} \; \in \; \{0,...,d-1\}$. This is easily seen, as follows:
\begin{eqnarray}\label{eq:alphaorth}
\textrm{Tr}[\ket{b^{(j)}}\bra{b^{(j)}}\ket{b^{(k)}}\bra{b^{(k)}}] &=& \frac{1}{d} + \frac{1}{2}\sum_{i}\alpha^{(i)}_{(b,j)}\alpha^{(i)}_{(\hat{b},k)} = \frac{1}{d} \nonumber \\
\Rightarrow \vec{\alpha}_{(b,j)}.\vec{\alpha}_{(\hat{b},k)} &=& 0
\end{eqnarray}
\end{itemize}

Now, using this representation of MUB states and density operators, we can rewrite the maximization problem of ~\eqref{eq:maxp} as:
\begin{eqnarray}
&&\max_{\ket{\psi}}\textrm{Tr}[P_{\vec{b}}\ket{\psi}\bra{\psi}] = \max_{\ket{\psi}}\textrm{Tr}\left[\frac{1}{L}\sum_{j}\ket{b^{(j)}}\bra{b^{(j)}}\ket{\psi}\bra{\psi}\right]\nonumber \\
&\leq&  \max_{\vec{\alpha}}\frac{1}{L}\sum_{j}\textrm{Tr}\left[\left(\frac{\mathbb{I}}{d} + \frac{\sum_{j}\alpha^{j}_{(b^{(j)},j)}\hat{A}_{j}}{2}\right)\left(\frac{\mathbb{I}}{d} + \frac{\sum_{i}\alpha^{(i)}\hat{A}_{i}}{2}\right)\right] \nonumber \\
&=& \max_{\vec{\alpha}}\frac{1}{L}\sum_{j}\left(\frac{1}{d} + \frac{1}{2}\vec{\alpha}_{(b^{(j)}, j)}.\vec{\alpha}\right) \nonumber \\
&=& \frac{1}{d} + \max_{\vec{\alpha}}\frac{1}{2L}\sum_{j}\vec{\alpha}_{(b^{(j)}, j)}.\vec{\alpha} 
\end{eqnarray}
Now we only need to find the real $(d^{2}-1)$-dimensional vector $\vec{\alpha}$, that maximizes the sum $\sum_{j}\vec{\alpha}_{(b^{(j)}, j)}.\vec{\alpha}$. If we now define an ``average'' vector corresponding to each string $\vec{b}$, as follows
\begin{equation}
\frac{1}{L}\sum_{j}\vec{\alpha}_{(b^{(j)}, j)} = \vec{\alpha}_{(\textrm{avg})}
\end{equation}
then, it becomes obvious that the maximum is attained when $\vec{\alpha}$ is parallel to $\vec{\alpha}^{(\textrm{avg})}$. Since it is a vector corresponding to a pure state, its norm is given by ~\eqref{eq:alphanorm}, so that
\begin{equation}
\vec{\alpha}_{(\textrm{max})} = \sqrt{\frac{2(d-1)}{d}}\frac{\vec{\alpha}_{(\textrm{avg})}}{|\vec{\alpha}_{(\textrm{avg})}|}
\end{equation}
Note that this maximizing vector has a constant overlap with all vectors $\vec{\alpha}^{(b^{(j)}, j)}$, for a given string $\vec{b}$. In other words, for each string $\vec{b}$, the maximum is attained by the vector that makes equal angles with all the vectors that constitute the ``average'' vector ($\vec{\alpha}_{(\textrm{avg})}$) corresponding to that string. Note however that this vector may not always correspond to a valid state.

Now that we know the maximizing vector, we can go ahead and compute the value of $\zeta$ in \eqref{eq:maxp}.
\begin{eqnarray}\label{eq:bound}
\max_{\ket{\psi}}\textrm{Tr}[P_{\vec{b}}\ket{\psi}\bra{\psi}] &\leq& \frac{1}{d} + \max_{\vec{\alpha}}\frac{1}{2L}\sum_{j}\vec{\alpha}_{(b^{(j)}, j)}.\vec{\alpha} \nonumber \\
&=& \frac{1}{d} + \frac{1}{2}\max_{\vec{\alpha}}\vec{\alpha}_{(\textrm{avg})}.\vec{\alpha} \nonumber \\
&=& \frac{1}{d} + \frac{1}{2}\frac{\vec{\alpha}_{(\textrm{avg})}.\vec{\alpha}_{(\textrm{avg})}}{|\vec{\alpha}_{(\textrm{avg})}|}\sqrt{\frac{2(d-1)}{d}} \nonumber \\
&=& \frac{1}{d} + \frac{1}{2}|\vec{\alpha}_{(\textrm{avg})}|\sqrt{\frac{2(d-1)}{d}} \nonumber \\
&=& \frac{1}{d} + \frac{1}{2\sqrt{L}}\frac{2(d-1)}{d} \nonumber \\
&=& \frac{1}{d}\left(1 + \frac{d-1}{\sqrt{L}}\right)
\end{eqnarray}
where we have used the fact that the vector $\vec{\alpha}_{(\textrm{avg})}$ have a constant norm
which can be computed as follows:
\begin{eqnarray}
\vec{\alpha}_{(\textrm{avg})}.\vec{\alpha}_{(\textrm{avg})} &=& \frac{1}{L^{2}}\sum_{j,k}\vec{\alpha}^{(b^{(k)},k)}.\vec{\alpha}^{(b^{(j)},j)} \nonumber \\
&=& \frac{1}{L^{2}}\sum_{j}\vec{\alpha}_{(b^{(j)},j)}.\vec{\alpha}_{(b^{(j)},j)} \nonumber \\
&=& \frac{1}{L^{2}}(L)\left[\frac{2(d-1)}{d}\right] \nonumber \\
\Rightarrow |\vec{\alpha}_{(\textrm{avg})}| &=& \frac{1}{\sqrt{L}}\sqrt{\frac{2(d-1)}{d}},
\end{eqnarray}
thus proving our claim. The second step follows from the fact that vectors corresponding to different MUB states have zero inner product (see property $(M2)$ above).
\end{proof}

Note that the fact that the bases are mutually unbiased was crucial in giving rise to properties $(M1)$ and $(M2)$ which in turn enabled us to identify the maximizing vector $\alpha_{\textrm{max}}$. Indeed the maximizing vector corresponding to a given string $\vec{b}$ might not always correspond to a 
valid state, in which case the bound we derive cannot be achieved. 
However, there exist strings of basis elements $\vec{b}$, for which we can explicitly construct a state that has equal trace overlap with the states that constitute the corresponding operator $P_{\vec{b}}$. 
These are in fact states of the form
\begin{equation}
P_{\vec{b}} = \frac{1}{L}\sum_{j}\ket{b^{(j)}}\bra{b^{(j)}},\, \mbox{where} \; \vec{b} = \{c,...,c\},
\end{equation}
for any $c\in \{0,\dots,d-1\}$. Clearly, for the symmetric MUBs that we construct, an eigenstate of the unitary $U$ that cycles between the different MUBs has the same trace overlap with each of the states $\{\ket{b^{(j)}}, j = 0,\ldots,L-1\}$, for a fixed value of $b$. To see this, suppose $\ket{\phi}$ is an eigenvector of $U$ with eigenvalue $\lambda$, then for all $0 \leq j \leq L-1$ and a given value of $b$,
\begin{eqnarray}
\textrm{Tr}[\ket{b^{(j)}}\bra{b^{(j)}}\ket{\phi}\bra{\phi}] &=& |\langle b^{(j)}|\phi\rangle|^{2} = |\langle b^{(1)}|(U^\dagger)^{j-1}|\phi\rangle|^{2} \nonumber \\
&=& (|\lambda|^{2})|\langle b^{(1)}|\phi\rangle|^{2} \nonumber \\
&=& |\langle b^{(1)}|\phi\rangle|^{2} \; 
\end{eqnarray}
This is indeed the case for $L=4$ MUBs in $d=4$, where the lower bound we derive is achieved by eigenstates of $U$.

\end{document}